\documentclass{amsart}

\usepackage[T1]{fontenc}
\usepackage{latexsym,amsfonts}
\usepackage{tikz}
\usetikzlibrary{calc,shapes,arrows,automata}
\usepackage{amsmath, amssymb}
\usepackage{algorithm}
\usepackage{algorithmic}
\usepackage{url}
\usepackage{subcaption} 
\usepackage{caption}
\usepackage{centernot}
\usepackage{epsfig}
\usepackage{thmtools}
\usepackage{hyperref}
\usepackage{multirow}
\usepackage{textcomp}
\usepackage{supertabular}
\usepackage{preamble}

\begin{document}

\begin{abstract}
    \label{sec:abstract}
    Barrier certificates, a form of \emph{state invariants}, provide an automated approach to the verification of the safety of dynamical systems. 
    Similarly to barrier certificates, recent works explore the notion of closure certificates, a form of \emph{transition invariants}, to verify dynamical systems against $\omega$-regular properties including safety.
    A \emph{closure certificate}, defined over state pairs of a dynamical system, is a real-valued function whose zero superlevel set characterizes an inductive transition invariant of the system.   
    The search for such a certificate can be effectively automated by assuming it to be within a specific template class, \textit{e.g.} a polynomial of a fixed degree, and then using optimization techniques such as sum-of-squares (SOS) programming to find it. 
    Unfortunately, one may not be able to find such a certificate for a fixed template.
    In such a case, one must change the template, \textit{e.g.} increase the degree of the polynomial.
    In this paper, we consider a notion of multiple closure certificates dubbed interpolation-inspired closure certificates.
    An interpolation-inspired closure certificate consists of a set of functions which jointly over-approximate a transition invariant by first considering one-step transitions, then two, and so on until a transition invariant is obtained.
    The advantage of interpolation-inspired closure certificates is that they allow us to prove properties even when a single function for a fixed template cannot be found using standard approaches. 
    We present SOS programming and a scenario program to find these sets of functions and demonstrate the effectiveness of our proposed method to verify persistence and general $\omega$-regular specifications in some case studies.
\end{abstract}

\title[Interpolation-Inspired Closure Certificates]{Interpolation-Inspired Closure Certificates}

\thanks{
This work was supported by NSF under grants CNS-2111688 and CNS-2145184.}

\author[mohammed adib oumer]{Mohammed Adib Oumer} 
\author[vishnu murali]{Vishnu Murali} 
\author[majid zamani]{Majid Zamani}

\address{Department of Computer Science at the University of Colorado, Boulder, CO, USA.}
\email{\{mohammed.oumer,~vishnu.murali,~majid.zamani\}@colorado.edu}
\urladdr{https://www.hyconsys.com/members/moumer/}
\urladdr{https://www.hyconsys.com/members/vmurali/}
\urladdr{https://www.hyconsys.com/members/mzamani/}

\maketitle

\section{Introduction} 
\label{sec:intro}

With the proliferation of cyber-physical systems in mission critical areas, it is crucial to formally verify that they satisfy properties of interest. 
One rigorous paradigm to frame these properties of interest is in terms of logic such as linear temporal logic (LTL) \cite{pnueli1977temporal} or $\omega$-regular automata \cite{vardi1994reasoning}.
Two useful approaches to verify systems against these specifications are through the use of barrier certificates \cite{prajna2004safety}, which act as inductive state invariants, or closure certificates \cite{murali2024closure}, which act as inductive transition invariants.
Closure certificates introduced in \cite{murali2024closure} aim to capture transitive closure of transition relations with the goal of automating the verification of LTL and $\omega$-regular specifications.
The search for such closure certificates can be automated using optimization approaches \cite{parrilo2003semidefinite,prajna2002sostools}.  
These approaches typically rely on first fixing a template and then making use of the above approaches to search for an appropriate function in this template. 
When one is unable to find such a certificate for a given template, one is forced to consider a different template and try again. 
Inspired by the success of interpolation \cite{mcmillan2003interpolation} in the verification of software and hardware systems, and their adoption of barrier certificates in interpolation-inspired barrier certificates \cite{oumer2024ibc}, we present a notion of \emph{interpolation-inspired closure certificates}.
In this paper, we show that one may find such interpolation-inspired certificates to verify LTL properties such as safety and persistence, even when we fail to find a single function to do so. 
The search for such interpolation-inspired certificates can be carried out effectively through existing approaches.

Verification of systems against safety, a prominent LTL property, can be automated using barrier certificates~\cite{prajna2004safety}.
A barrier certificate is a real valued function that is nonpositive over the initial states, positive over the unsafe states, and nonincreasing with transitions. 
Thus, a barrier certificate guarantees safety as its zero level set separates the reachable and unsafe states with the zero-sublevel set over-approximating the set of reachable states.
The authors of \cite{wongpiromsarn2015automata} extended the barrier certificate-based approach to refute violations of linear temporal logic (LTL) specifications expressed via $\omega$-automata, although such an approach is conservative. 
An approach to remedy this proposed in \cite{murali2023co}, was to show that a certain set is visited only a fixed number of times.
Unfortunately, such an approach relied on a user-selected hyperparameter to limit the number of visits to this set.
Another approach to address this is to consider a well-foundedness argument\cite{alpern1987recognizing} over the transitive closure of the transition relation.

Transition invariants~\cite{podelski2004transition} act as an over-approximation of the transitive closure of the transition relation and are used to verify programs against $\omega$-regular properties. 
The authors of \cite{murali2024closure} introduced closure certificates, to serve as functional analogs of inductive transition invariants. 
Such certificates may be used to verify larger classes of $\omega$-regular properties.
Intuitively, a closure certificate is a real valued function that is nonnegative for a pair of states $(x,y)$ if $y$ may be reachable from $x$.
Ensuring that this certificate is negative for all pairs $(x_0, x_u)$, where $x_0$ is an initial state, and $x_u$ is an unsafe state, thus acts as a proof of safety.
To verify general LTL specifications, one may extend the earlier one with a ranking function argument \cite{alpern1987recognizing} to prove the well-foundedness.

In the context of dynamical systems, typical formulations for closure certificates assume the presence of a single inductive transition invariant. 
When those conditions are not met, one cannot validate or refute the property of interest for the system.
A common approach to address this challenge is to incrementally strengthen the property as a combination of other properties.
One method of adopting such an incremental strengthening approach is that of interpolation-based model checking \cite{mcmillan2003interpolation}.
This paper utilizes ideas inspired by (logical) interpolation to find multiple functions incrementally that together serve as a closure certificate to prove $\omega$-regular properties for discrete-time dynamical systems. 
This allows us to find simpler functions to ease the verification process.

\vspace{0.5em} \noindent \textbf{Related works.}
Inductive invariants and incremental inductive proofs have served as important methods in verifying the safety of finite state-transition systems, as demonstrated in \cite{zhang2004incremental, cabodi2008strengthening, bradley2011sat}. 
The goal of safety verification is to ensure that the negation of the formula representing the unsafe states is an invariant.
Unfortunately, this is often not inductive (i.e., $i)$ it holds true for the initial states, and $ii)$ it holds true for the next subsequent state if it is true for the current state). Thus, the formula is incrementally strengthened to become inductive via interpolation-based approaches~\cite{mcmillan2003interpolation, bradley2012understanding}. 
Given a property of interest, such proofs try to incrementally constrain the formula until an inductive proof is obtained. 
In the context of bounded model checking, interpolation unrolls the transition function a fixed number of times and finds intermediate formulae called interpolants until an inductive invariant formula is found. This approach was adopted in \cite{oumer2024ibc} for the verification of the safety of dynamical systems using barrier certificates.

The authors of~\cite{prajna2004safety} proposed the notion of barrier certificates as a discretization-free approach to give guarantees of safety~\cite{prajna2007convex} for dynamical and hybrid systems.
Another common framework for formulating safety is the use of invariant sets in the form of barrier functions \cite{ames2019control}. They fix the invariant set instead of searching for it and then utilize it to design a controller.
The results in~\cite{wongpiromsarn2015automata} presented an approach that uses barrier certificates to verify the LTL properties specified by $\omega$-automata. The approach has been extended for verification and synthesis for more general dynamical systems~\cite{jagtap2018temporal,jagtap2020formal}.
The authors of \cite{dimitrova2014deductive,chatterjee2024sound} consider the use of barrier certificates and ranking functions to verify LTL properties. 
The authors of~\cite{podelski2004transition} proposed a notion of transition invariants and demonstrated their use in verifying programs against $\omega$-regular properties. 
This concept was adopted in~\cite{murali2024closure} for an automated approach to verify dynamical systems against $\omega$-regular properties.

This paper builds on the results reported in~\cite{oumer2025icc}. 
The main additional contributions of this version, relative to~\cite{oumer2025icc}, are as follows: i) We include a new section on employing interpolation-inspired closure certificates for the verification of LTL specifications; ii) We introduce a section on scenario programs as a data-driven approach to compute interpolation-inspired closure certificates; iii) We replace the previous safety-focused case study section with one addressing a general LTL specification; iv) We revise our persistence case study to consider a three-dimensional system.

\section{Preliminaries}
\label{sec:prelims}
\subsection{Notation}
\label{subsec:notation}
We use $\N$ and $\R$ to denote the set of natural numbers and reals, respectively. 
For $k \in \R$, we use $\R_{\geq k}$ and $\R_{> k}$ to denote the intervals $[k, \infty)$ and $(k,\infty)$, respectively. Similarly, for any natural number $n \in \N$, we use $\N_{\geq n}$ to denote the set of natural numbers greater than or equal to $n$. 
Given two numbers $a,b \in \R$, we use $\min(a,b)$ and $\max(a,b)$ to denote their minimum and maximum, respectively.
The $n$-dimensional Euclidean space is denoted by $\R^n$.
The infinity norm ($\infty$-norm) of $x \in \R^n$ is denoted by $\|x\|$.
Symbols $\forall$ and $\exists$ denote the universal and existential quantifiers, respectively. 

Given a set $X$, we use $|X|$ to denote its cardinality. 
Given a relation $R \subseteq X \times Y$ and an element $x \in X$, we use $R(x)$ to denote the set $\{ y \mid (x,y) \in R \}$. 
For $x, y \in X$, $[x;y]$ denotes the concatenation of $x$ and $y$ into a single vector.
The set $X^{\omega}$ denotes the set of countably infinite sequences of elements in $X$.
We use the notation 
$(x_0, x_1, \ldots )\in X^{\omega}$ for $\omega$-sequences.
Let $\Inf(s)$ be the set of infinitely often occurring elements in the sequence $s  =(x_0, x_1, \ldots )$.
That is, given a sequence $s = (x_0, x_1, \ldots )$, we say that $x \in \Inf(s)$, if for all $i \in \N$, there exists $j \geq i$, such that $x_j = x$.
Given an infinite sequence $s = (x_0, x_1, \ldots )$ and two natural numbers $i,j \in \N$ where $i \leq j$, we use $s[i,j]$ to indicate the finite sequence $(x_i, x_{i+1}, \ldots, x_j)$, and $s[i, \infty)$ to indicate the infinite sequence $( x_i, x_{i+1}, \ldots )$. 
Finally, we use $s[i]$ to denote the $i^{th}$ element in the sequence $s$, \textit{i.e.}, we have $s[i] = x_i$ for any $i \in \N$.

\subsection{Discrete-time Dynamical System}
\label{subsec:system}
In this paper, we deal with discrete-time dynamical systems.
\begin{definition}
\label{def:system}
    A discrete-time dynamical system is given by the tuple $\Sys = (\Xx, \Xx_0, f)$,
    where the state set is denoted by $\Xx$, the set of initial states is $\Xx_0 \subseteq \Xx$, and $f\subseteq  \Xx \times \Xx$ is the state transition relation that describes the evolution of the states of the system. 
    That is, for $x_t$, the state of the system at time step $t\in \N$, the state of the system in the next time step is given by $x_{t+1} \in f(x_t)$, $\forall x_t \in \Xx.$
\end{definition}

If $\forall x \in \Xx$, we have $|f(x)| = 1$, then we consider the transition relation $f$ to be a deterministic state transition function.
We use $f$ for both a set-valued map when it is a relation and a transition function when it is a function.
For notational convenience, given a state $x \in \Xx$, we use $x'$ to indicate a state in $f(x)$ (\textit{i.e.,} $x' \in f(x)$). 
Furthermore, we use $x_0 \in \Xx_0$ to denote a state in the initial set of states $\Xx_0$.
A state sequence is an infinite sequence  $(x_0, x_1, \ldots) \in \Xx^{\omega}$ where $x_0 \in \Xx_0$, and $x_{i+1} \in f(x_i)\ \forall i \in \N$.
We say $x_j$ is reachable from $x_i$ if $i,j \in \N, j > i$ and $x_i, x_j \in (x_0, x_1, \ldots)$ for some state sequence $(x_0, x_1, \ldots)$.

\subsection{Formal Specifications and Verification Results}
\label{subsec:specifications}
In this paper, we study increasingly complex specifications from safety to LTL over discrete-time dynamical systems.

\subsubsection{Safety}
We say that a system $\Sys$ starting from some set of initial states $\Xx_0$ is safe with respect to a set of unsafe states $\Xx_u \subseteq \Xx$ if for any state sequence $(x_0, x_1, \ldots)$, $x_i \notin \Xx_u\ \forall i \in \N_{\geq 1}$.
Remark that we assume that the initial and unsafe sets are disjoint, i.e. $\Xx_0 \cap \Xx_u = \emptyset$.
For simplicity, we use $x_u \in \Xx_u$ to denote a state in the unsafe set of states $\Xx_u$.
\begin{definition}[BC]
\label{def:bc}
   A function $\Bb: \Xx \to \R$ is a BC for a system $\Sys$ with respect to a set of unsafe states $\Xx_u$ if $\forall x \in \Xx$, $\forall x' \in f(x)$, $\forall x_0 \in \Xx_0$, and $\forall x_u \in \Xx_u$: 
    \begin{align}
        \label{eq:bc_1}
        & \Bb(x_0) \leq 0, \\
        \label{eq:bc_2}
        & \Bb(x_u) > 0, \text{ and } \\
        \label{eq:bc_3}
        & \big( \Bb(x) \leq 0 \big) {\implies} \big( \Bb(x') \leq 0 \big).
    \end{align}
\end{definition}
\begin{theorem}[BCs imply safety \cite{prajna2004safety}]
    For a system $\Sys$ with unsafe states $\Xx_u$, the existence of a barrier certificate $\Bb$ satisfying conditions~\eqref{eq:bc_1}-\eqref{eq:bc_3} implies its safety.
\end{theorem}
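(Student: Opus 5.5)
We argue by induction along an arbitrary state sequence that the zero-sublevel set of $\Bb$ over-approximates the reachable states, and then use condition~\eqref{eq:bc_2} to conclude that no reachable state is unsafe. Fix any state sequence $(x_0, x_1, \ldots)$ of $\Sys$, so $x_0 \in \Xx_0$ and $x_{i+1} \in f(x_i)$ for all $i \in \N$.

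The claim is that $\Bb(x_i) \leq 0$ for every $i \in \N$, proved by induction on $i$. For the base case, $x_0 \in \Xx_0$, so condition~\eqref{eq:bc_1} gives $\Bb(x_0) \leq 0$. For the inductive step, suppose $\Bb(x_i) \leq 0$. Since $x_{i+1} \in f(x_i)$, we may instantiate condition~\eqref{eq:bc_3} with $x = x_i$ and $x' = x_{i+1}$. This yields $\Bb(x_{i+1}) \leq 0$.

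Now suppose, for contradiction, that $x_i \in \Xx_u$ for some $i \in \N_{\geq 1}$. Condition~\eqref{eq:bc_2} then forces $\Bb(x_i) > 0$, which contradicts the invariant just established. Hence $x_i \notin \Xx_u$ for all $i \in \N_{\geq 1}$. Since the state sequence was arbitrary, $\Sys$ is safe with respect to $\Xx_u$. The argument in fact also gives $x_0 \notin \Xx_u$, consistent with the standing assumption $\Xx_0 \cap \Xx_u = \emptyset$.

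No step here is a genuine obstacle. The only point needing care is that $f$ may be a relation rather than a function. This is harmless, because condition~\eqref{eq:bc_3} is quantified over every $x' \in f(x)$, so the inductive step holds for whichever successor the sequence actually takes.
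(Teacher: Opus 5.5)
Your proof is correct and is essentially the standard argument. The paper does not reprove this result; it cites it from \cite{prajna2004safety}. The cited proof, like the paper's own proof of Theorem~\ref{thm:icc_safe}, works by induction along a state sequence: \eqref{eq:bc_1} and \eqref{eq:bc_3} keep $\Bb \leq 0$ on every reachable state, which then contradicts \eqref{eq:bc_2}.
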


Building on the notion of interpolation \cite{mcmillan2003interpolation}, the authors of \cite{oumer2024ibc} proposed a notion of \emph{interpolation-inspired barrier certificates (IBCs)}.
Such certificates consist of a set of functions that together ensure safety.
\begin{definition}[IBC]
\label{def:ibc}
A set of functions $\Bb_i: \Xx \rightarrow \R$,  $\forall  i \in \{0,\ldots, k \}$, is an IBC for a system $\Sys$ with respect to a set of unsafe states $\Xx_u$ if $\forall x \in \Xx$, $\forall x' \in f(x)$, $\forall x_0 \in \Xx_0$, and $\forall x_u \in \Xx_u$:
\begin{align}
    \label{eq:ibc_1}
    &\Bb_0(x_0) \leq 0, && \\
    \label{eq:ibc_2}
    &(\Bb_i(x) \leq 0) {\implies} (\Bb_{i+1}(x') \leq 0),
    \forall i \in \{0,{\ldots}, (k{-}1) \},\\
    \label{eq:ibc_3}
    &(\Bb_k(x) \leq 0) {\implies} (\Bb_{k}(x') \leq 0),    \text{ and }\\
    \label{eq:ibc_4}
    &\Bb_i(x_u) > 0, 
    \quad \forall  i \in \{0,\ldots,k \}.
\end{align}
\end{definition}
\begin{theorem}[IBCs imply safety \cite{oumer2024ibc}]
The existence of an IBC $\Bb_i: \Xx \rightarrow \R$,$\forall i \in \{0,\ldots,k \}$, satisfying conditions~\eqref{eq:ibc_1}-\eqref{eq:ibc_4} for a system $\Sys$ with unsafe states $\Xx_u$ implies its safety.
\label{thm:ibc}
\end{theorem}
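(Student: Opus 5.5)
We argue by induction along an arbitrary state sequence, showing that every visited state lies in the zero-sublevel set of one of the functions $\Bb_i$; condition~\eqref{eq:ibc_4} then places that state outside $\Xx_u$. Concretely, fix any state sequence $(x_0, x_1, \ldots)$ of $\Sys$, so $x_0 \in \Xx_0$ and $x_{t+1} \in f(x_t)$ for all $t \in \N$. The claim to prove by induction on $t$ is
\[
\Bb_{\min(t,k)}(x_t) \leq 0 \quad \text{for all } t \in \N .
\]

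For the base case $t=0$ we have $\min(0,k)=0$, and $\Bb_0(x_0)\leq 0$ is exactly condition~\eqref{eq:ibc_1}. For the inductive step, assume $\Bb_{\min(t,k)}(x_t)\leq 0$ and split into two cases. If $t<k$, then $\min(t,k)=t\in\{0,\ldots,k-1\}$. Applying~\eqref{eq:ibc_2} with $x=x_t$ and $x'=x_{t+1}\in f(x_t)$ gives $\Bb_{t+1}(x_{t+1})\leq 0$, and $t+1=\min(t+1,k)$. If instead $t\geq k$, then $\min(t,k)=\min(t+1,k)=k$, and~\eqref{eq:ibc_3} yields $\Bb_k(x_{t+1})\leq 0$. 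Both cases use only that the implications hold for all $x\in\Xx$ and all $x'\in f(x)$, so they apply to every successor in a nondeterministic system.

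To conclude, suppose for contradiction that $x_t\in\Xx_u$ for some $t\geq 1$. Condition~\eqref{eq:ibc_4} with $i=\min(t,k)$ gives $\Bb_{\min(t,k)}(x_t)>0$, which contradicts the invariant. Hence $x_t\notin\Xx_u$ for all $t\in\N_{\geq 1}$. Since the sequence was arbitrary, $\Sys$ is safe. In fact the argument also covers $t=0$, consistent with the standing assumption $\Xx_0\cap\Xx_u=\emptyset$.

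There is no substantive obstacle. The only point needing care is choosing the index $\min(t,k)$, which merges the finite chain of one-step implications in~\eqref{eq:ibc_2} with the eventually self-inductive function $\Bb_k$ from~\eqref{eq:ibc_3} into a single induction. An equivalent alternative would be to note that $\bigcup_i\{x:\Bb_i(x)\leq 0\}$ is an inductive invariant containing $\Xx_0$ and disjoint from $\Xx_u$.
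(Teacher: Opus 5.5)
Your proof is correct, and it is essentially the argument the paper relies on. The paper cites this result rather than proving it, but it uses the same $\min$-indexed induction (the invariant $\Bb_{\min(t,k)}(x_t)\leq 0$ from \eqref{eq:ibc_1}--\eqref{eq:ibc_3}, contradicted via \eqref{eq:ibc_4}) in the observation with $m=\min(k_b,p)$ in the proof of Theorem~\ref{thm:ibc2icc} and in the proof of Theorem~\ref{thm:icc_safe}.
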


Note that when we consider IBCs with $k = 0$, one recovers the definition of BCs in Definition \ref{def:bc}.
Now, we discuss existing formulations of closure certificates.
The formulation of \emph{closure certificates (CCs)}~\cite{murali2024closure} for safety is given below.
\begin{definition}[CC for Safety]
\label{def:cc_safe}
A  function $\Tt: \Xx \times \Xx \to \R$ is a CC for a system  $\Sys$ with respect to a set of unsafe states $\Xx_{u}$ if $\exists \eta \in \R_{ > 0}$ such that $\forall x, y \in \Xx$, $\forall x' \in f(x)$, $\forall x_0 \in \Xx_0$, and $\forall x_u \in \Xx_u$:
\begin{align}
    \label{eq:cc_safe1} 
    & \Tt(x, x') \geq 0, \\
    \label{eq:cc_safe2}
    & \big( \Tt(x', y) \geq 0 \big) \implies \big( \Tt(x, y) \geq 0 \big), \text{ and } \\
    \label{eq:cc_safe3}
    & \Tt(x_0, x_u) \leq - \eta. 
\end{align}
\end{definition}

\begin{theorem}[CCs imply safety \cite{murali2024closure}]
    \label{thm:cc_safe}
   The existence of a function $\Tt: \Xx \times \Xx \to \R$ satisfying conditions~\eqref{eq:cc_safe1}-\eqref{eq:cc_safe3} for a system $\Sys$ with unsafe states $\Xx_{u}$ implies its safety.
\end{theorem}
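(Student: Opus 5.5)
The plan is a contradiction argument. We show that conditions~\eqref{eq:cc_safe1} and~\eqref{eq:cc_safe2} together force $\Tt(x_i, x_j) \geq 0$ whenever $x_j$ is reachable from $x_i$. This is incompatible with~\eqref{eq:cc_safe3} as soon as an unsafe state is reachable from an initial one.

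Suppose $\Sys$ is not safe. Then there exist a state sequence $(x_0, x_1, \ldots)$ with $x_0 \in \Xx_0$ and an index $n \in \N_{\geq 1}$ such that $x_n \in \Xx_u$.

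We prove by downward induction on $i$ that $\Tt(x_i, x_n) \geq 0$ for every $i \in \{0, \ldots, n-1\}$.
\begin{itemize}
\item \emph{Base case $i = n-1$.} Since $x_n \in f(x_{n-1})$, condition~\eqref{eq:cc_safe1} applied with $x = x_{n-1}$ and $x' = x_n$ gives $\Tt(x_{n-1}, x_n) \geq 0$.
\item \emph{Inductive step.} Assume $\Tt(x_{i+1}, x_n) \geq 0$ for some $i \leq n-2$. Apply condition~\eqref{eq:cc_safe2} with $x = x_i$, $x' = x_{i+1} \in f(x_i)$ and $y = x_n \in \Xx$. This yields $\Tt(x_i, x_n) \geq 0$.
\end{itemize}
Taking $i = 0$ gives $\Tt(x_0, x_n) \geq 0$.

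On the other hand, $x_0 \in \Xx_0$ and $x_n \in \Xx_u$, so condition~\eqref{eq:cc_safe3} gives $\Tt(x_0, x_n) \leq -\eta < 0$, using $\eta > 0$. This is a contradiction, so no state sequence ever enters $\Xx_u$ at a time step $i \geq 1$, and $\Sys$ is safe.

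The only delicate point is the direction of the induction. Condition~\eqref{eq:cc_safe2} propagates nonnegativity backwards along the first argument while the second argument stays fixed. The induction must therefore start from the one-step pair $(x_{n-1}, x_n)$ supplied by~\eqref{eq:cc_safe1} and move toward $x_0$, rather than forward from $x_0$. The hypothesis $n \geq 1$ matches the definition of safety, which excludes only $i \geq 1$; with $\Xx_0 \cap \Xx_u = \emptyset$ the case $n = 0$ cannot occur anyway.
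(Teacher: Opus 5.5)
Your proof is correct and takes essentially the same approach as the paper. The paper cites this theorem from \cite{murali2024closure} without proving it, but its proof of Theorem~\ref{thm:icc_safe} at $k=0$ is exactly your contradiction argument: it obtains $\Tt(x_j,x_{j+1})\geq 0$ from condition~\eqref{eq:cc_safe1}, pushes the first argument back to $x_0$ by induction using condition~\eqref{eq:cc_safe2}, and contradicts condition~\eqref{eq:cc_safe3}.
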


\subsubsection{Persistence}
We say that a system visits a region $\Xx_{VF} \subseteq \Xx$ only finitely often if for any state sequence $( x_0, x_1, \ldots)$, $\exists i \in \N$ such that $\forall j \geq i$, $j \in \N$, we have $x_j \notin \Xx_{VF}$. 
To ensure persistence, one may use a Lyapunov-like argument \cite{podelski2006model} combined with an invariant argument via CCs.

\begin{definition}[CC for Persistence]
\label{def:cc_persistence}
    A function $\Tt: \Xx \times \Xx \to \R$ is a CC for $\Sys$ with a set of states $\Xx_{VF} \subseteq \Xx$ that must be visited only finitely often if $\exists \eta \in \R_{>0}$ such that $\forall x,y \in \Xx$, $\forall y', y'' \in \Xx_{VF}$, $\forall x' \in f(x)$, and $\forall x_0 \in \Xx_0$: 
    \begin{align}
        \label{eq:cc_pers1}
        &\Tt(x, x') \geq 0, \\
        \label{eq:cc_pers2}
        & \big( \Tt(x', y) \geq 0 \big) \implies \big( \Tt(x, y) \geq 0 \big), \text{ and }\\
        \label{eq:cc_pers3}
        &\big( \Tt(x_0, y') \geq 0 \big) \wedge \big( \Tt(y',y'') \geq 0 \big) \implies \nonumber \\
        &\qquad \big( \Tt(x_0,y'') \leq \Tt(x_0,y') - \eta \big). 
    \end{align}
\end{definition}
\begin{theorem}[CCs imply Persistence \cite{murali2024closure}]
    \label{thm:cc_persistence}
    The existence of a function $\Tt: \Xx \times \Xx \to \R$ satisfying conditions~\eqref{eq:cc_pers1}-\eqref{eq:cc_pers3} for a system $\Sys$ implies that the state sequences of the system visit the set $\Xx_{VF}$ finitely often.
\end{theorem}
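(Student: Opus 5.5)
The plan is to first show that the zero-superlevel set of $\Tt$ over-approximates reachability, and then to argue by contradiction using a decreasing-value argument along visits to $\Xx_{VF}$. In the first step I will show by induction on $k \geq 1$ that for every state sequence $(x_0, x_1, \ldots)$ and all $i < j$ we have $\Tt(x_i, x_j) \geq 0$. The base case $j = i+1$ is condition~\eqref{eq:cc_pers1}, because $x_{i+1} \in f(x_i)$. For the inductive step, assume $\Tt(x_{i+1}, x_j) \geq 0$ with $j - (i+1) = k$. Applying condition~\eqref{eq:cc_pers2} with $x = x_i$, $x' = x_{i+1}$ and $y = x_j$ then gives $\Tt(x_i, x_j) \geq 0$. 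The induction should therefore run on the gap $j - i$, peeling off the first transition each time.

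Next, suppose for contradiction that some state sequence $s = (x_0, x_1, \ldots)$ visits $\Xx_{VF}$ infinitely often. Then there are indices $i_1 < i_2 < \cdots$ with $x_{i_m} \in \Xx_{VF}$ for every $m$. By the first step, $\Tt(x_0, x_{i_m}) \geq 0$ for all $m$ (we may take $i_1 \geq 1$), and also $\Tt(x_{i_m}, x_{i_{m+1}}) \geq 0$. Since $x_0 \in \Xx_0$ and $y' = x_{i_m}$, $y'' = x_{i_{m+1}}$ both lie in $\Xx_{VF}$, condition~\eqref{eq:cc_pers3} yields
\begin{align*}
\Tt(x_0, x_{i_{m+1}}) \leq \Tt(x_0, x_{i_m}) - \eta .
\end{align*}
Iterating this gives $\Tt(x_0, x_{i_m}) \leq \Tt(x_0, x_{i_1}) - (m-1)\eta$. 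Because $\eta > 0$, the right-hand side becomes negative for $m$ large enough, which contradicts $\Tt(x_0, x_{i_m}) \geq 0$. Hence every state sequence visits $\Xx_{VF}$ only finitely often.

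The step needing the most care is the first one. The induction must be set up so that condition~\eqref{eq:cc_pers2} is applied with its \emph{first} argument stepping backward along the trajectory, and it must cover pairs $(x_i, x_j)$ that begin at arbitrary positions $i$, not only at $x_0$. This generality is needed because condition~\eqref{eq:cc_pers3} requires $\Tt(y', y'') \geq 0$ between two intermediate visits. A smaller point is the edge case where $x_0$ itself lies in $\Xx_{VF}$. This is handled by choosing the visit indices to be strictly positive, which is possible since there are infinitely many of them.
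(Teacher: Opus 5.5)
Your proof is correct and follows essentially the same route as the paper. The paper only cites this theorem, but its proof of Theorem~\ref{thm:icc_pers} reduces to it at $k=0$: induct along a trajectory to get $\Tt(x_i,x_j)\geq 0$ for all $i<j$, then iterate condition~\eqref{eq:cc_pers3} over successive visits to $\Xx_{VF}$ until $\Tt(x_0,\cdot)$ is forced negative, and your choice of visit indices $i_1\geq 1$ is slightly more careful than the paper's sketch, which does not rule out the first visit being $x_0$ itself.
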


\subsubsection{Linear Temporal Logic (LTL)}
The formulae in LTL~\cite{pnueli1977temporal}  are defined with respect to a set of finite atomic propositions $AP$ that are relevant to our system. 
Let $\Sigma = 2^{AP}$ denote the power set of atomic propositions.
A trace $w = ( w_0, w_1, \ldots, ) \in \Sigma^{\omega}$ is an infinite sequence of sets of atomic propositions.
The syntax of LTL can be given via the following grammar: 
\[
\phi := \top \;|\; a \;|\; \neg \phi \;|\; \mathsf{X} \phi  \;|\; \phi \mathsf{U} \phi ,
\]
where $\top$ indicates $\mathsf{true}$, $a \in AP$ denotes an atomic proposition, symbols $\wedge$, $\neg$ denote the logical AND and NOT operators, respectively.
The temporal operators next, and until, are denoted by $\mathsf{X}$, and $\mathsf{U}$, respectively.
The above operators are sufficient to derive the logical OR ($\vee$) and the implication ($\implies$) operators, and the temporal operators release ($\mathsf{R}$), eventually ($\mathsf{F}$) and always ($\mathsf{G}$).

We inductively define the semantics of an LTL formula with respect to trace $w$ as follows:
\begin{align}
&w \models a && \text{ if } a \in w[0] \\
& w \models \phi_1 \wedge \phi_2 && \text{ if } w \models \phi_1 \text{ and } w \models \phi_2 \\
& w \models \neg \phi && \text{ if } w \not\models \phi \\
& w \models \mathsf{X} \phi & & \text{ if } w[1, \infty) \models \phi\\
& w \models \phi_1 \mathsf{U} \phi_2 && \text{ if }\exists i \in \N \text{ such that } w[0,i] \models \phi_1 \nonumber \\ & && \text{ and } w[i+1, \infty) \models \phi_2
\end{align}

To reason about whether a system satisfies a property specified in LTL, we associate a labeling function $\Ll: \Xx \to \Sigma$ which maps each state of the system to a letter in the finite alphabet $\Sigma$. 
This naturally generalizes to mapping a state sequence of the system $( x_0, x_1, \ldots) \in \Xx^{\omega}$ to a trace $w = ( \Ll(x_0), \Ll(x_1), \ldots) \in \Sigma^{\omega}$. 
Let $TR(\Sys, \Ll)$ denote the set of all traces of $\Sys$ under the labeling map $\Ll$. 
Then the system $\Sys$ satisfies an LTL property $\phi$ under labeling map $\Ll$ if for all $w \in TR(\Sys, \Ll)$, we have $w \models \phi$.
We denote this as $\Sys \models_{\Ll} \phi$ and infer the labeling map from context.
Safety and persistence can be formulated as LTL formulae.

\vspace{0.3em}\noindent \textbf{Nondeterminstic B\"uchi Automata.}
A nondeterminstic B\"uchi automaton (NBA) $\Aa$ is a tuple $(\Sigma,\Qq, \Qq_0, \delta, \Qq_F)$, where:
$\Sigma$ is the alphabet,  $\Qq$ is a finite set of states, $\Qq_0 \subseteq \Qq$ is an initial set of states,  $\delta \subseteq \Qq \times \Sigma \times \Qq$ is the transition relation, and ${\Qq_F} \subseteq \Qq$ is the set of accepting states.
A run of the automaton $\Aa = (\Sigma,\Qq, \Qq_0, \delta, \Qq_F)$ over a trace $w = ( \sigma_0, \sigma_1, \sigma_2 \ldots, ) \in \Sigma^{\omega}$, is an infinite sequence of states  characterized as $\rho = ( q_0,q_1, q_2, \ldots, ) \in \Qq^{\omega}$ with $q_0 \in \Qq_0$ and $q_{i+1} \in \delta(q_i, \sigma_i)$.
An NBA $\Aa$ is said to accept a trace $w$ if there exists a run $\rho$ over $w$ where $\Inf(\rho) \cap {\Qq_F} \neq \emptyset$.

It is well known that given an LTL formula $\phi$ over a set of atomic propositions $AP$, one can construct an NBA $\Aa$ such that $w \in \Ll(\Aa)$ if and only if $w \models \phi$~\cite{vardi2005automata}.
An automata-theoretic technique to determine whether $\Sys \models_{\Ll} \phi$ is to first find the NBA $\Aa$ that represents $\neg \phi$, and then ensure that $\Sys \not\models_{\Ll} \neg \phi$ by showing that no trace of the system is accepted by the NBA $\Aa$.
While converting an LTL formula to an NBA is exponential in the size of the formula, negating an LTL formula has a complexity that is linear in its size.

To verify whether a given system satisfies a desired LTL property, a closure certificate is used on the product $\Sys \otimes \Aa$.

\begin{definition}[CC for LTL]
\label{def:cc_ltl}
   Consider a system $\Sys = (\Xx, \Xx_0, f)$ and NBA $\Aa= (\Sigma, \Qq, \Qq_0, \delta, \Qq_F)$ representing the complement of an LTL formula $\phi$. 
   A function $\Tt: \Xx \times \Qq \times  \Xx \times \Qq \to \R$ is a CC for $\Sys \otimes \Aa$ if $\exists \eta \in \R_{> 0}$ such that $\forall x, y, y' \in \Xx$, $\forall x' \in f(x)$, $\forall x_0 \in \Xx_0$, $\forall n, p \in \Qq$, $\forall n' \in \delta(n, \Ll(x))$, $\forall q_0 \in \Qq_0$ and $\forall r, r' \in \Qq_F$: 
    \begin{align}
        \label{eq:cc_ltl1}
        & \Tt\big( (x, n), (x', n') \big) \geq 0, \\
        \label{eq:cc_ltl2} 
        & \Big( \Tt \big( (x', n'), (y, p) \big) \geq 0 \Big) {\implies} \Big( \Tt \big((x, n), (y, p) \big) \geq 0 \Big), \\
        \label{eq:cc_ltl3}
        & \Big( \Tt\big((x_0,q_0),(y,r) \big) \geq 0 \Big) \wedge \Big( \Tt( (y,r), (y',r') ) \geq 0 \Big) \implies \nonumber \\
        & \qquad \Big(\Tt \big( (x_0,q_0),(y',r') \big) \leq \Tt \big( (x_0,q_0),(y,r) \big) - \eta \Big).  
    \end{align}
\end{definition}
\begin{theorem}[CCs verify LTL~\cite{murali2024closure}]
\label{thm:cc_ltl} 
   Consider a system $\Sys$ and an LTL formula $\phi$. Let NBA $\Aa$ represent the complement of the specification, \textit{i.e}, $\neg \phi$. The existence of a CC that satisfies the conditions~\eqref{eq:cc_ltl1}-\eqref{eq:cc_ltl3} implies that $\Sys \models_{\Ll} \phi$. 
\end{theorem}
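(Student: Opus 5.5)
We argue by contradiction, via the standard automata-theoretic reduction. Suppose $\Sys \not\models_{\Ll} \phi$. Then some state sequence $(x_0, x_1, \ldots)$ of $\Sys$ has trace $w = (\Ll(x_0), \Ll(x_1), \ldots)$ with $w \models \neg\phi$. Since $\Aa$ represents $\neg\phi$, there is an accepting run $\rho = (q_0, q_1, \ldots)$ of $\Aa$ over $w$. It has $q_0 \in \Qq_0$, $q_{i+1} \in \delta(q_i, \Ll(x_i))$, and $\Inf(\rho) \cap \Qq_F \neq \emptyset$. Pairing them gives an infinite product sequence $\big((x_0,q_0),(x_1,q_1),\ldots\big)$, and infinitely many indices $i_1 < i_2 < \cdots$ have $q_{i_j} \in \Qq_F$.

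The first step is to show that $\Tt$ is nonnegative on every pair along this product path. Precisely, for all $i < j$ we want $\Tt\big((x_i,q_i),(x_j,q_j)\big) \geq 0$. We prove this by induction on $j - i$. For $j = i+1$ it is exactly condition~\eqref{eq:cc_ltl1}, instantiated with $x = x_i$, $x' = x_{i+1} \in f(x_i)$, $n = q_i$, $n' = q_{i+1} \in \delta(q_i, \Ll(x_i))$. For $j - i > 1$, the induction hypothesis gives $\Tt\big((x_{i+1},q_{i+1}),(x_j,q_j)\big) \geq 0$. Applying~\eqref{eq:cc_ltl2} with $(x,n) = (x_i,q_i)$, $(x',n') = (x_{i+1},q_{i+1})$ and $(y,p) = (x_j,q_j)$ then yields the claim. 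This is the transition-invariant property: the zero superlevel set of $\Tt$ contains the transitive closure of the product transition relation along this path.

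The second step uses~\eqref{eq:cc_ltl3} as a ranking argument. For each $j \geq 1$, apply it with $(x_0,q_0)$ as the initial pair, $(y,r) = (x_{i_j}, q_{i_j})$ and $(y',r') = (x_{i_{j+1}}, q_{i_{j+1}})$; both automaton states lie in $\Qq_F$. The premises are supplied by the first step, since $0 \le i_j$ and $i_j < i_{j+1}$. (If $i_1 = 0$, simply start the subsequence at $i_2$ so that all indices are strictly positive.) Let $v_j = \Tt\big((x_0,q_0),(x_{i_j},q_{i_j})\big)$. Then $v_{j+1} \leq v_j - \eta$, hence $v_j \leq v_1 - (j-1)\eta \to -\infty$. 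On the other hand, the first step gives $v_j \geq 0$ for all $j$. This is a contradiction, since $\eta > 0$. Therefore no trace of $\Sys$ is accepted by $\Aa$, and so $\Sys \models_{\Ll} \phi$.

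The only delicate point is the bookkeeping in the first step: the automaton transition must be taken on the label of the \emph{current} state, $\delta(q_i,\Ll(x_i))$, so that the run indexing matches the form of~\eqref{eq:cc_ltl1}. We also need the index used in~\eqref{eq:cc_ltl3} to refer to a pair strictly reachable from the initial pair, which the positive indices handle. Everything else is a direct chaining of the three conditions, mirroring the argument behind Theorem~\ref{thm:cc_persistence} lifted to the product $\Sys \otimes \Aa$.
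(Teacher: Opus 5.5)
Your proof is correct and follows the paper's route. The paper does not reprove this cited result, but its proof of the ICC analogue (Theorem~\ref{thm:icc_ltl}, of which this statement is exactly the $k=0$ case) reduces to persistence on $\Sys \otimes \Aa$. That persistence argument is your chain: nonnegativity along the product path from~\eqref{eq:cc_ltl1}--\eqref{eq:cc_ltl2}, then the $\eta$-decrease from~\eqref{eq:cc_ltl3} forcing a contradiction; you simply inline it. Your care in making all accepting indices strictly positive is a detail the paper glosses over. The phrase ``since $0 \le i_j$'' should read $0 < i_j$, which your parenthetical already ensures.
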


In the next section, we use the idea of 
interpolation to define a notion of multiple closure certificates.

\section{Interpolation-Inspired Closure Certificates}
\label{sec:icc}
Here, we introduce a notion of interpolation-inspired closure certificates (ICCs) and demonstrate their efficacy in verifying safety, persistence, and general LTL specifications.

\subsection{ICCs for Safety}
\label{subsec:icc_safety}
We first define ICCs for the safety of a system $\Sys$ as in Definition \ref{def:system}.

\begin{definition}[ICC for Safety]
\label{def:icc_safe} 
A set of functions
$\Tt_i: \Xx \times \Xx \to \R$, $\forall i \in \{0,\ldots,k \}$, is an ICC for a system  $\Sys$ with respect to a set of unsafe states $\Xx_{u}$ if $\exists \eta \in \R_{>0}$ such that $\forall x, y \in \Xx$, $\forall x' \in f(x)$, $\forall x_0 \in \Xx_0$, and $\forall x_u \in \Xx_u$:
\begin{align}
    \label{eq:icc_safe1}
    &\Tt_0(x, x') \geq 0, \\
    \label{eq:icc_safe2}
    & \big(\Tt_i(x', y) \geq 0 \big) \implies \big( \Tt_{i+1}(x, y) \geq 0 \big),  \forall i \in \{0,\ldots,(k-1) \},\\
    \label{eq:icc_safe3}
    & \big(\Tt_k(x', y) \geq 0 \big) \implies \big( \Tt_k(x, y) \geq 0 \big), \text{ and }\\
    \label{eq:icc_safe4} 
    &\Tt_i(x_0, x_u) \leq - \eta, \quad \forall i \in \{0,\ldots,k \}.
\end{align}
\end{definition}

The existence of an ICC implies the safety of a system $\Sys$.

\begin{theorem}[ICCs imply Safety]
\label{thm:icc_safe}
Consider a system $\Sys = (\Xx, \Xx_0, f)$ and a set of unsafe states $\Xx_u$.
The existence of a set of functions
$\Tt_i: \Xx \times \Xx \to \R,\forall i \in \{0,\ldots,k \}$, satisfying conditions~\eqref{eq:icc_safe1}-\eqref{eq:icc_safe4} implies its safety.
\end{theorem}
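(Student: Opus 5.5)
We argue by contradiction, showing that the functions $\Tt_i$ jointly over-approximate the transitive closure of $f$. The key intermediate claim is the following. For every finite path $(z_0, z_1, \ldots, z_m)$ with $m \geq 1$ and $z_{j+1} \in f(z_j)$ for all $j$, we have $\Tt_{\min(m-1,k)}(z_0, z_m) \geq 0$.

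Once this holds, safety follows directly. Suppose some state sequence $(x_0, x_1, \ldots)$ with $x_0 \in \Xx_0$ reaches $x_m \in \Xx_u$ for some $m \geq 1$. Applying the claim to the prefix $(x_0, \ldots, x_m)$ gives $\Tt_{\min(m-1,k)}(x_0, x_m) \geq 0$. Condition~\eqref{eq:icc_safe4} holds for every index $i \in \{0,\ldots,k\}$, so it gives $\Tt_{\min(m-1,k)}(x_0,x_m) \leq -\eta < 0$, a contradiction. Hence no unsafe state is visited at any time $i \geq 1$.

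To prove the claim, we induct on the path length $m$. We peel off the \emph{first} transition, because conditions~\eqref{eq:icc_safe2}-\eqref{eq:icc_safe3} propagate nonnegativity backward from $(x',y)$ to $(x,y)$.

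\begin{itemize}
\item \emph{Base case $m=1$.} Here $z_1 \in f(z_0)$, and condition~\eqref{eq:icc_safe1} gives $\Tt_0(z_0,z_1) \geq 0$.
\item \emph{Inductive step $m \geq 2$.} Apply the hypothesis to the shorter path $(z_1,\ldots,z_m)$ of length $m-1$, obtaining $\Tt_{\min(m-2,k)}(z_1,z_m) \geq 0$. Now instantiate $x = z_0$, $x' = z_1 \in f(z_0)$ and $y = z_m$, and split on the index.
  \begin{itemize}
  \item If $m-2 < k$, the index is $i = m-2 \in \{0,\ldots,k-1\}$. Condition~\eqref{eq:icc_safe2} yields $\Tt_{m-1}(z_0,z_m) \geq 0$, and $m-1 = \min(m-1,k)$.
  \item If $m-2 \geq k$, the index is $k$. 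Condition~\eqref{eq:icc_safe3} yields $\Tt_k(z_0,z_m) \geq 0$, and again $k = \min(m-1,k)$.
  \end{itemize}
\end{itemize}

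The only delicate point is getting the indexing right. The index must grow by exactly one each time we prepend a transition, and it must saturate at $k$. This is why the claim is stated with $\min(m-1,k)$ and why the induction peels the first step rather than the last. Apart from this bookkeeping the argument is routine, and it mirrors the proof of Theorem~\ref{thm:cc_safe}, which is the special case $k=0$.
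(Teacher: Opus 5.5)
Your proof is correct and follows the same route as the paper. Condition~\eqref{eq:icc_safe1} serves as the base case, condition~\eqref{eq:icc_safe2} raises the index each time a transition is prepended, condition~\eqref{eq:icc_safe3} saturates it at $k$, and condition~\eqref{eq:icc_safe4} gives the contradiction. Your single claim with index $\min(m-1,k)$ is a cleaner, more explicit packaging of the paper's two-case bookkeeping: first $\Tt_i(x_{j-i},x_{j+1})\geq 0$ for $j\leq k$, then saturation at $\Tt_k$ for $j>k$.
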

\begin{proof}
Let us assume that there exists a state sequence of the system $( x_0,\ldots, x_u , \ldots )$ that reaches an unsafe state $x_u \in \Xx_u$ from some initial state $x_0$.
From condition~\eqref{eq:icc_safe1}, we have 
$\Tt_0(x_j, x_{j+1}) \geq 0,\forall j \in \N$. From condition~\eqref{eq:icc_safe2} and induction, we have $\Tt_i(x_{j-i}, x_{j+1}) \geq 0$ for all $0\leq i \leq k, j\geq i$. 
Thus, if $x_{j+1} = x_u$ for some $j \leq k$, there is some $j = i$ such that $\Tt_i(x_0, x_u) \geq 0$. If $x_{j+1} = x_u$ for some $j>k$, we must have $\Tt_k(x_0, x_u) \geq 0$ from condition~\eqref{eq:icc_safe3} and induction. Together, we get $\Tt_i(x_0, x_u) \geq 0$ for \emph{some} $0\leq i \leq k$.
According to condition~\eqref{eq:icc_safe4}, 
$\Tt_i(x_0, x_u) \leq - \eta$ for \emph{all} $0\leq i \leq k$, where $\eta \in \R_{>0}$, which is a contradiction. 
\end{proof}

We now show that any system whose safety can be ensured using IBCs can also have its safety ensured using ICCs.
\begin{theorem}
    \label{thm:ibc2icc}
    Consider a system $\Sys = (\Xx, \Xx_0, f)$, with an unsafe set of states $\Xx_u$.
    Given an IBC $\Bb_i: \Xx \to \R$, $\forall i \in \{0,\ldots,k_b \}$, (Definition~\ref{def:ibc}), there exists an ICC $\Tt_i: \Xx \times \Xx \to \R$, $ \forall i \in \{0,\ldots,(k_b-1) \}$ (Definition \ref{def:icc_safe}).
\end{theorem}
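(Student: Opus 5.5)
The plan is to build the ICC directly from the IBC by an explicit, possibly discontinuous construction. Definition~\ref{def:icc_safe} places no regularity requirement on the $\Tt_i$, so it is enough to let $\Tt_i(x,y)$ record the truth value of the statement ``$y$ is consistent with being reached from $x$ in $i+1$ steps, as seen through the IBC.'' Write $k := k_b$ and assume $k \geq 1$; for $k=0$ the index range is empty and the statement is vacuous or degenerate. For each $i \in \{0,\ldots,k-1\}$ I would define
\[
\Tt_i(x,y) := \begin{cases} 0 & \text{if } \forall j \in \{0,\ldots,k\}:\ \big(\Bb_j(x) \leq 0\big) \implies \big(\Bb_{\min(j+i+1,k)}(y) \leq 0\big),\\ -1 & \text{otherwise,}\end{cases}
\]
and take $\eta := 1$. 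The ICC then has top index $k-1$, and the ``saturated'' function $\Tt_{k-1}$ plays the role of the inductive transition invariant.

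I would then check the four conditions of Definition~\ref{def:icc_safe} in turn.

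\textbf{Condition~\eqref{eq:icc_safe1}.} Fix $x' \in f(x)$ and $j$ with $\Bb_j(x) \leq 0$. If $j<k$, condition~\eqref{eq:ibc_2} gives $\Bb_{j+1}(x') \leq 0$. If $j=k$, condition~\eqref{eq:ibc_3} gives $\Bb_k(x') \leq 0$. In both cases $\Bb_{\min(j+1,k)}(x') \leq 0$, so $\Tt_0(x,x') = 0$.

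\textbf{Conditions~\eqref{eq:icc_safe2} and~\eqref{eq:icc_safe3}.} These are a one-step composition argument. Suppose $\Tt_i(x',y) \geq 0$, i.e.\ $\Tt_i(x',y) = 0$, and $\Bb_j(x) \leq 0$. As above, $\Bb_{\min(j+1,k)}(x') \leq 0$. Applying the defining property of $\Tt_i(x',y)$ with index $\min(j+1,k)$ yields $\Bb_{\min(\min(j+1,k)+i+1,k)}(y) \leq 0$. The index simplifies to $\min(j+i+2,k)$, which is exactly what $\Tt_{i+1}(x,y) = 0$ requires. For $i = k-1$ both $\min(j+i+1,k)$ and $\min(j+i+2,k)$ equal $k$, so the same computation gives~\eqref{eq:icc_safe3} for $\Tt_{k-1}$.

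\textbf{Condition~\eqref{eq:icc_safe4}.} Take $j=0$. By~\eqref{eq:ibc_1}, $\Bb_0(x_0) \leq 0$, while by~\eqref{eq:ibc_4}, $\Bb_{i+1}(x_u) > 0$, where $i+1 \leq k$. So the implication in the definition of $\Tt_i(x_0,x_u)$ fails, giving $\Tt_i(x_0,x_u) = -1 \leq -\eta$ for every $i \in \{0,\ldots,k-1\}$.

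The main obstacle is not any single inequality but choosing the index bookkeeping correctly. Concretely, this means the shift by $i+1$ and the saturation at $k$ through $\min(\cdot,k)$, arranged so that the composition step in~\eqref{eq:icc_safe2} is exact and the last function becomes genuinely inductive. Quantifying over all $j$, rather than only $j=0$, is what makes the one-step composition go through.
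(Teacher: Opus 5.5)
Your proposal is correct and follows the paper's proof. The paper defines the same $\Tt_i$: it takes the value $0$ exactly when $\Bb_\ell(x)\le 0$ implies $\Bb_{\min(k_b,\ell+i+1)}(y)\le 0$ for every $\ell$, and $-\eta$ otherwise, and it verifies the four conditions through the same one-step composition using $\ell'=\min(k_b,\ell+1)$. Your explicit identity $\min(\min(j+1,k)+i+1,k)=\min(j+i+2,k)$ also spells out the check of~\eqref{eq:icc_safe3}, which the paper only dismisses as ``a similar strategy.''
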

\begin{proof}
To prove this result, we make use of the  following observation.
Suppose there exists a state sequence $(x_0, x_1, \ldots, x_k, \ldots, x_{p}, \ldots)$ where $x_k =z$, and $p = \ell + (i+1)$ for some $i \in \N$. 
Then, if $\Bb_{\ell}(z) \leq 0$ for some $0\leq \ell \leq k_b$, by induction, we must have $\Bb_{m}(x_p) \leq 0$ where $m = \min(k_b, p)$. 
Thus, the function $\Bb_{m} $ over-approximates $(i+1)$-step reachable states from $x_{k}$. 
Observe that if $\ell = k_b$, then we have $m = k_b$.
Now, we demonstrate how one may construct ICCs, by considering $(i+1)$-steps from the IBC function $\Bb_{\ell}$ to the IBC function $\Bb_m$.
We define $\Tt_i: \Xx \times \Xx \to \R$  as:
 \[
\Tt_i(x,y) = 
\begin{cases}
0 & \text{if } \Bb_\ell(x) > 0 \text{ or } \Bb_m(y) \leq 0 \\& \text{ for all } 0 \leq \ell \leq k_b,  \\
& \text{ where } m = \min(k_b, \ell + (i + 1) ),\\
-\eta & \text{otherwise},
\end{cases}
\]
 where $\eta \in \R_{ > 0}$ is a positive value.
We now show that $\Tt_i$ is an ICC.
First, suppose that $\Tt_0(x, x') < 0$ for some $x \in \Xx$.
For this to be true, we must have $\Bb_\ell(x) \leq 0$, and $\Bb_{m}(x') > 0$ for some $0 \leq \ell \leq k_b$, where $m = \min(k_b, \ell + 1 )$. 
If $\ell \leq k_b - 1$, then we have $\Bb_{\ell}(x) \leq 0$, and $\Bb_{\ell+1}(x') > 0$, which contradicts condition~\eqref{eq:ibc_2}.
If $\ell = k_b$, then  we have $\Bb_{k_b}(x) \leq 0$, and $\Bb_{k_b}(x') > 0$, which contradicts condition~\eqref{eq:ibc_3}.
So $\Tt_0$ must satisfy condition~\eqref{eq:icc_safe1}.
Second, suppose that $\Tt_i(x', y) \geq 0$ and $\Tt_{i+1}(x,y) < 0$  for some $0 \leq i \leq (k_b -2)$.
By the construction of functions $\Tt_i$, if $\Tt_{i+1}(x,y) < 0$, then we must have $\Bb_{\ell}(x) \leq 0$, and $\Bb_m(y) > 0$ for some $0 \leq \ell \leq k_b$, where $m = \min(k_b,\ell + (i+2))$. 
Furthermore, as $\Tt_i(x', y) \geq 0$, 
we have that either $\Bb_{\ell'}(x') > 0$ or $\Bb_{m'}(y) \leq 0$ for all $0 \leq \ell' \leq k_b$, where $m' = \min(k_b,\ell' + (i+1))$.
Consider $\ell' = \min(k_b,\ell+1)$. 
Then we have $m' = m$ and both $\Bb_m(y) > 0$ and $\Bb_m(y) \leq 0$ must hold.
Since this isn't possible, the other alternative is if $\Bb_{\ell}(x) \leq 0 $, and $\Bb_{\ell'}(x') > 0$.
However, this contradicts with either condition \eqref{eq:ibc_2} (if $\ell < k_b$), or condition \eqref{eq:ibc_3} (if $\ell = k_b)$.
Consequently, condition~\eqref{eq:icc_safe2} holds.
One may follow a similar strategy to show that condition~\eqref{eq:icc_safe3} also holds.
Finally, consider $\Tt_i(x_0, x_u)$. Since $\Xx_0$ and $\Xx_u$ are disjoint, and from conditions~\eqref{eq:ibc_1} and~\eqref{eq:ibc_4}, we have $\Bb_0(x_0) \leq 0$, and $\Bb_j(x_u) > 0$ for all $1\leq j \leq k_b$. By definition, $\Tt_i(x_0, x_u) = - \eta$ 
 for all $0 \leq i \leq (k_b - 1)$, and hence condition~\eqref{eq:icc_safe4} holds.
\end{proof}

\subsection{ICCs for Persistence}
\label{subsec:icc_persistence}
We now show how ICCs may be used to show that a set of states is visited only finitely often. 

\begin{definition}[ICC for Persistence]
\label{def:icc_pers}
    A set of functions $\Tt_i: \Xx \times \Xx \to \R$,  $\forall i \in \{0,\ldots,k \}$, is an ICC for $\Sys$ with a set of states $\Xx_{VF} \subseteq \Xx$ that must be visited finitely often if $\exists \eta \in \R_{ > 0}$ such that $\forall x,y \in \Xx$, $\forall y', y'' \in \Xx_{VF}$, $\forall x' \in f(x)$, and $\forall x_0 \in \Xx_0$: 
    \begin{align}
        \label{eq:icc_pers1}
        &\Tt_0(x, x') \geq 0, \\
        \label{eq:icc_pers2}
        &\big( \Tt_i(x', y) \geq 0 \big) \implies \big( \Tt_{i+1}(x, y) \geq 0 \big),  \forall i \in \{0,\ldots,(k-1) \}, \\
        \label{eq:icc_pers3}
        &\big( \Tt_k(x', y) \geq 0 \big) \implies \big( \Tt_k(x, y) \geq 0 \big), \text{ and } \\
        \label{eq:icc_pers4} 
        &\big( \Tt_k(x_0, y') \geq 0 \big) \wedge \big( \Tt_k(y',y'') \geq 0 \big) \implies \nonumber \\
        & \qquad \big( \Tt_k(x_0,y'') \leq \Tt_k(x_0,y') - \eta \big).
    \end{align}
\end{definition}
We now describe the utility of ICCs for persistence.
\begin{theorem}[ICCs imply Persistence]
    \label{thm:icc_pers}
    Consider a system $\Sys = (\Xx,\Xx_0,f)$. The existence of a set of functions $\Tt_i: \Xx \times \Xx \to \R,\forall i \in \{0,\ldots,k \}$ satisfying conditions~\eqref{eq:icc_pers1}-\eqref{eq:icc_pers4} implies that the state sequences of the system visit the set $\Xx_{VF}$ only finitely often.
\end{theorem}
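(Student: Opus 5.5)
The plan is to argue by contradiction, mirroring the proof of Theorem~\ref{thm:icc_safe} and then adding the ranking argument behind Theorem~\ref{thm:cc_persistence}. Suppose some state sequence $s = (x_0, x_1, \ldots)$ visits $\Xx_{VF}$ infinitely often. Then the set $J = \{ j \in \N \mid x_j \in \Xx_{VF}\}$ is infinite.

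\textbf{Step 1: a reachability lemma for $\Tt_k$.} I first show that $\Tt_k(x_a, x_b) \geq 0$ whenever $b - a \geq k+1$.
\begin{itemize}
\item By condition~\eqref{eq:icc_pers1}, $\Tt_0(x_j, x_{j+1}) \geq 0$ for all $j$.
\item By induction on $i$ using condition~\eqref{eq:icc_pers2} with $x = x_a$, $x' = x_{a+1}$ and $y = x_{a+i+1}$, we get $\Tt_i(x_a, x_{a+i+1}) \geq 0$ for all $0 \leq i \leq k$ and all $a$. In particular, $\Tt_k(x_a, x_{a+k+1}) \geq 0$.
\item Now fix $b$ and induct backwards on $a$ using condition~\eqref{eq:icc_pers3}. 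From $\Tt_k(x_{a+1}, x_b) \geq 0$ we obtain $\Tt_k(x_a, x_b) \geq 0$. This extends the previous bullet to every $a \leq b - (k+1)$.
\end{itemize}

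\textbf{Step 2: choose well-separated visits.} The only subtlety, and the step I expect to need care, is that $\Tt_k$ only certifies pairs at least $k+1$ steps apart. Condition~\eqref{eq:icc_pers4} involves $\Tt_k$ alone, not the lower-index functions. So I cannot use two consecutive visits to $\Xx_{VF}$ directly; I pass to a subsequence.

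Since $J$ is infinite, choose indices $j_1 < j_2 < \cdots$ in $J$ such that $j_1 \geq k+1$ and $j_{m+1} - j_m \geq k+1$ for all $m$. Set $y_m = x_{j_m} \in \Xx_{VF}$. By Step 1 (taking $a = 0$, and then $a = j_m$), we have $\Tt_k(x_0, y_m) \geq 0$ and $\Tt_k(y_m, y_{m+1}) \geq 0$ for every $m$.

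\textbf{Step 3: derive the contradiction.} For each $m$, apply condition~\eqref{eq:icc_pers4} with $y' = y_m$ and $y'' = y_{m+1}$. Both premises hold, so $\Tt_k(x_0, y_{m+1}) \leq \Tt_k(x_0, y_m) - \eta$. Iterating gives
\[
\Tt_k(x_0, y_m) \leq \Tt_k(x_0, y_1) - (m-1)\eta .
\]
Because $\eta > 0$, the right-hand side is negative once $m$ is large enough. This contradicts $\Tt_k(x_0, y_m) \geq 0$ from Step 2. Hence every state sequence visits $\Xx_{VF}$ only finitely often.
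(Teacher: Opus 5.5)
Your proof is correct and follows the same route as the paper. You argue by contradiction, propagate nonnegativity of $\Tt_k$ along the trajectory using conditions~\eqref{eq:icc_pers1}--\eqref{eq:icc_pers3}, and then apply the $\eta$-descent of condition~\eqref{eq:icc_pers4}.

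Your thinning step (first chosen visit at index $\geq k+1$, consecutive chosen visits at least $k+1$ steps apart) is a genuine refinement rather than a detour. The paper simply asserts $\Tt_k(z_i,z_j)\geq 0$ for all pairs of visits and starts from $z_i=x_{j+1}$ with $j\geq k-1$. Conditions~\eqref{eq:icc_pers1}--\eqref{eq:icc_pers3} do not guarantee that for points fewer than $k+1$ steps apart, and your argument closes that gap.
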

\begin{proof}
Suppose that there is some state sequence $ (x_0, x_1, \ldots )$ of the system that starts from state $x_0 \in \Xx_0$ and visits $\Xx_{VF}$ infinitely often.
Let the infinite sequence $(z_0, z_1, \ldots, )$ denote the states that are visited in $\Xx_{VF}$ in that order, \textit{i.e.}, the trajectory is $(x_0, \ldots, z_0, \ldots ,z_1, \ldots)$. We take $z_i = x_{j+1}$ for some $j\geq k-1$. 
Since states in $\Xx_{VF}$ are visited infinitely often, we can start from the $(k-1)^{th}$ step onward.
From conditions~\eqref{eq:icc_pers1}-\eqref{eq:icc_pers3}, and induction, we get $\Tt_k(x_{j-k}, z_i) \geq 0$, $\Tt_k(x_0, z_i) \geq 0$,  and  $\Tt_k(z_i, z_j) \geq 0$ for all indices $j > i$, $i,j \in \N$.
Using the latter two inequalities, condition~\eqref{eq:icc_pers4}, and induction, one obtains
$
\Tt_k(x_0, z_i) \leq \Tt_k(x_0, z_0) - i \eta.
$
As this is true $\forall i \in \N$, and $\eta \in \R_{ > 0}$, $\exists \ell \in \N$ such that $\Tt_k(x_0, z_\ell) < 0$.
This is a contradiction. 
\end{proof}

Setting $k = 0$ gives us a closure certificate for persistence as in Definition \ref{def:cc_persistence} and condition \eqref{eq:icc_pers2} is no longer necessary.

\subsection{ICCs for LTL Specifications}
\label{subsec:icc_ltl}
To verify whether a system satisfies a desired LTL formula $\phi$, we  first construct an NBA $\Aa = (\Sigma,\Qq, \Qq_0, \delta, \Qq_F)$ that represents the complement of the specification, $\neg \phi$.
The NBA state set is finite and, therefore, we can denote the set $\Qq$ as the set $\{0, 1, \ldots, |\Qq| - 1 \}$.
We then construct the product $\Sys \otimes \Aa = (\Xx', \Xx_0', f')$ of the system $\Sys = (\Xx, \Xx_0, f)$ with the NBA $\Aa$, where: 
$\Xx' = \Xx \times \set{0, \ldots, |\Qq| - 1 }$ indicates the state set, $\Xx'_0 = \Xx_0 \times \set{ q_0 \mid q_0 \in \Qq_{0}}$ indicates the initial set of states, and the state transition relation $f'$ is defined as 
     $f'((x,q_i)) = \big\{ (x', q_j) \mid q_j \in \delta(q_i, \Ll(x)) \big\}$.
To verify whether a given system satisfies a desired LTL property, we make use of an ICC on the product $\Sys \otimes \Aa$.

\begin{definition}[ICC for LTL]
\label{def:icc_ltl}
   Consider a system $\Sys = (\Xx, \Xx_0, f)$ and NBA $\Aa= (\Sigma,\Qq, \Qq_0, \delta, \Qq_F)$ that represents the complement of an LTL formula $\phi$. A set of functions $\Tt_i: \Xx \times \Qq \times \Xx \times \Qq \to \R$,$\forall i \in \set{0,\ldots,k}$, is an ICC for $\Sys \otimes \Aa$ if $\exists \eta \in \R_{ > 0}$ such that $\forall x, y, y' \in \Xx$, $\forall x' \in f(x)$, and $\forall x_0 \in \Xx_0$, $\forall n,p \in \Qq$, $\forall n' \in \delta(n, \Ll(x))$,  $\forall q_0 \in \Qq_0$ and $\forall r,r' \in \Qq_F$: 
    \begin{align}
        \label{eq:icc_ltl1}
        & \Tt_0\big( (x, n), (x', n') \big) \geq 0, \\
        \label{eq:icc_ltl2}
        & \Big( \Tt_i \big( (x', n'), (y, p) \big) \geq 0 \Big) {\implies} \Big( \Tt_{i+1} \big((x, n), (y, p) \big) \geq 0 \Big) \nonumber\\
        & \qquad \forall\ 0\leq i < k,\\
        \label{eq:icc_ltl3}
        & \Big(\Tt_k \big( (x', n'), (y, p) \big) \geq 0 \Big) {\implies} \Big( \Tt_k \big((x, n), (y, p) \big) \geq 0 \Big), \\
        &\Big(\Tt_k \big((x_0,q_0),(y,r) \big) \geq 0 \Big) \wedge \Big( \Tt_k( (y,r), (y',r') ) \geq 0 \Big) \implies \nonumber \\
        \label{eq:icc_ltl4}
        & \qquad \Big(\Tt_k \big( (x_0,q_0),(y',r') \big) \leq \Tt_k \big( (x_0,q_0),(y,r) \big) - \eta \Big).  
    \end{align}
\end{definition}

Now, we provide the next result on the verification of LTL specifications using ICC on $\Sys \otimes \Aa$.
\begin{theorem}[ICCs verify LTL]
\label{thm:icc_ltl}
   Consider a system $\Sys$ and an LTL formula $\phi$. Let NBA $\Aa$ represent the complement of the specification, \textit{i.e}, $\neg \phi$. The existence of a set of functions $\Tt_i: \Xx \times \Qq \times \Xx \times \Qq \to \R$, $\forall i \in \set{0,\ldots,k}$, satisfying conditions~\eqref{eq:icc_ltl1}-\eqref{eq:icc_ltl4} implies that $\Sys \models_{\Ll} \phi$. 
\end{theorem}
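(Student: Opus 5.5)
I would argue by contradiction, following the proof of Theorem~\ref{thm:icc_pers} but on the product $\Sys \otimes \Aa$. Suppose $\Sys \not\models_{\Ll} \phi$. Then some trace $w = (\Ll(x_0), \Ll(x_1), \ldots) \in TR(\Sys, \Ll)$ is accepted by $\Aa$, which represents $\neg\phi$. So there is a state sequence $(x_0, x_1, \ldots)$ with $x_0 \in \Xx_0$ and an accepting run $(q_0, q_1, \ldots)$ with $q_0 \in \Qq_0$, $q_{j+1} \in \delta(q_j, \Ll(x_j))$ and $\Inf(\rho) \cap \Qq_F \neq \emptyset$. The pairs $(x_j, q_j)$ form a trajectory of $\Sys \otimes \Aa$: each step $(x_j,q_j)\to(x_{j+1},q_{j+1})$ is exactly an instance $x' \in f(x)$, $n' \in \delta(n,\Ll(x))$ covered by the quantifiers of Definition~\ref{def:icc_ltl}.

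The first key step is to push reachability up to $\Tt_k$. Condition~\eqref{eq:icc_ltl1} gives $\Tt_0((x_j,q_j),(x_{j+1},q_{j+1})) \geq 0$ for all $j$. Applying~\eqref{eq:icc_ltl2} repeatedly, as in Theorem~\ref{thm:icc_safe}, gives $\Tt_i((x_{j},q_{j}),(x_{j+i+1},q_{j+i+1})) \geq 0$ for $0 \le i \le k$. Then~\eqref{eq:icc_ltl3} propagates $\Tt_k$ backward along the trajectory. The conclusion is $\Tt_k((x_a,q_a),(x_b,q_b)) \geq 0$ for all $b \geq a + k + 1$. In particular this holds with $a = 0$, and also with $a,b$ both large.

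Next, select the accepting visits. Since accepting states occur infinitely often, choose indices $t_0 < t_1 < \cdots$ with $q_{t_\ell} \in \Qq_F$, $t_0 \geq k+1$, and $t_{\ell+1} \geq t_\ell + k + 1$. Such a spacing exists because infinitely many indices are accepting, so we can skip ahead greedily. Write $z_\ell = (x_{t_\ell}, q_{t_\ell})$. By the previous step, $\Tt_k((x_0,q_0), z_\ell) \geq 0$ and $\Tt_k(z_\ell, z_{\ell+1}) \geq 0$ for every $\ell$. Condition~\eqref{eq:icc_ltl4} then gives $\Tt_k((x_0,q_0), z_{\ell+1}) \leq \Tt_k((x_0,q_0), z_\ell) - \eta$. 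By induction, $\Tt_k((x_0,q_0), z_\ell) \leq \Tt_k((x_0,q_0), z_0) - \ell\eta$. This is negative for $\ell$ large, contradicting $\Tt_k((x_0,q_0), z_\ell) \geq 0$.

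The main obstacle is the index bookkeeping in the first key step. Unlike with a single CC, $\Tt_k$ only over-approximates pairs separated by at least $k+1$ steps. Pairs at shorter distances are captured only by the lower $\Tt_i$, which do not appear in~\eqref{eq:icc_ltl4}. That is why the accepting visits must be chosen with gaps of at least $k+1$. The inductive claim I would actually prove is: for every $j$ and every $b \geq j+k+1$, $\Tt_k((x_j,q_j),(x_b,q_b)) \geq 0$. The proof is by downward induction on $j$ from $b-k-1$. The base case comes from the $\Tt_i$ chain at distance exactly $k+1$, and each further step backward uses~\eqref{eq:icc_ltl3}.
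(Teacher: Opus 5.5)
Your proof is correct and follows the paper's route. The paper observes that conditions~\eqref{eq:icc_ltl1}--\eqref{eq:icc_ltl4} form an ICC for persistence on $\Sys \otimes \Aa$, with the accepting product states as the finitely-visited set, and invokes Theorem~\ref{thm:icc_pers}; you inline that persistence argument directly on the product, and your choice of accepting visits with $t_0 \geq k+1$ and gaps of at least $k+1$ is exactly what supplies the $\Tt_k$ premises of~\eqref{eq:icc_ltl4}, a point the paper's proof of Theorem~\ref{thm:icc_pers} glosses over when it asserts $\Tt_k(z_i,z_j) \geq 0$ for all $j>i$.
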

\begin{proof}
    Observe that an ICC satisfying conditions~\eqref{eq:icc_ltl1}-\eqref{eq:icc_ltl4} mirrors an ICC for persistence for the product $\Sys \otimes \Aa$.
    From Theorem~\ref{thm:icc_pers}, we observe that the product system visits the accepting states only finitely often. Thus, we infer that no trace of the system is in the language of the NBA $\Aa$, thereby verifying the LTL specification.
\end{proof}

The advantage of ICCs over CCs is that ICCs enable the use of multiple simpler templates as certificates, rather than a single, more intricate template as in the case of CCs (cf. case studies). This makes the search for suitable certificates more manageable. In the next section, we introduce a computational tool for finding ICCs and discuss the benefits of ICCs from the perspective of computational complexity.

\section{Computation of ICCs}
\label{sec:synth}
This section presents an approach to synthesize ICCs using sum-of-squares (SOS)~\cite{parrilo2003semidefinite} programming when the relevant sets of the system $\Sys = (\Xx, \Xx_0, f)$ are semi-algebraic sets which are subsets of $\R^n$ (\textit{i.e.}, $\Xx \subseteq \R^n$) and the transition function $f:\Xx \to \Xx$ is polynomial.
A set $Y \subseteq \R^n$ is semi-algebraic if it can be defined with the help of a vector of polynomial inequalities $h(x)$ as $Y = \{ x \mid h(x) \geq 0 \}$, where the inequalities are element-wise.
Moreover, SOS handles optimization problems when the constraints are written as a conjunction. 
However, an ICC as defined in Definitions \ref{def:icc_safe}, \ref{def:icc_pers} and \ref{def:icc_ltl} requires the satisfaction of logical implications, which cannot be checked using the SOS approach.
Therefore, we strengthen the conditions based on implication into those that are compatible with SOS optimization through the S-procedure \cite{yakubovich1971s}.
For example, condition \eqref{eq:cc_safe2} can be rewritten as $\Tt(x, y) - \gamma \Tt(x', y) \geq 0$, where $\gamma \in \mathbb{R}_{>0}$.
Remark that if one satisfies this inequality, then condition \eqref{eq:cc_safe2} is satisfied.
Lastly, to find ICCs, we first fix the template to be a linear combination of user-defined basis functions
$   \Tt(x,y) = \mathbf{c}^T\mathbf{p}(x,y) = \sum_{m = 1}^{n} c_m p_m(x,y)$,
where functions $p_m$ are monomials over state variables $x$ and $y$, and $c_1, \ldots, c_n$ are real coefficients.
We now describe the relevant SOS conditions for verifying safety, persistence, and LTL specifications.

\subsection{Safety}
\label{subsec:find_safety}
To adopt an SOS approach to find ICCs as in Definition~\ref{def:icc_safe}, we consider the sets $\Xx$, $\Xx_0$, and $\Xx_{u}$ to be semi-algebraic sets defined with the help of vectors of polynomial inequalities $g_{A}(x)$, $g_{0}(x)$, and $g_{u}(x)$, respectively.
As these sets are semi-algebraic, the sets $\Xx \times \Xx$ and $\Xx_0 \times \Xx_{u}$ are semi-algebraic
with the corresponding vectors $g_{B}$ and $g_{C}$, respectively.
Then the search for an ICC as in Definition~\ref{def:icc_safe} reduces to showing that the following polynomials are sum-of-squares:
\begin{align}
    \label{eq:sos_icc_safe1} 
    & \Tt_0(x, x') - \lambda_{A}^T(x)g_{A}(x), \\
    \label{eq:sos_icc_safe2}
    &\Tt_{i+1}(x, y) - \gamma_i\Tt_i(x', y)-\lambda_{B,i}^T(x,y)g_{B}(x,y) \nonumber\\
    &\qquad \forall i \in \{0,\ldots,(k-1) \},\\
    \label{eq:sos_icc_safe3} 
    &\Tt_{k}(x, y) - \gamma_k\Tt_k(x', y)-\lambda_{B,k}^T(x,y)g_{B}(x,y), \text{ and }\\
    \label{eq:sos_icc_safe4}
    &-\eta - \Tt_i(x_0, x_u) - \lambda_{C,i}^T(x_0,x_u)g_{C}(x_0,x_u) \nonumber\\
    &\qquad \forall i \in \{0,\ldots,k \},
\end{align}
where $x' = f(x)$, $\eta, \gamma_i \in\R_{ > 0}$ and the multipliers $\lambda_{A}$, $\lambda_{B,i}$, $\lambda_{C,i}$, are SOS polynomials over the state variable $x$, the state variables $x,y$, and the state variables $x_0,x_u$ over the sets $\Xx$, $\Xx \times \Xx$, and $\Xx_0 \times \Xx_{u}$, respectively.

\subsection{Persistence}
\label{subsec:find_persistence}
To find ICCs as in Definition~\ref{def:icc_pers}, we consider the sets $\Xx$, $\Xx_0$, and $\Xx_{VF}$ to be semi-algebraic sets defined with the help of vectors of polynomial inequalities $g_{A}(x)$, $g_{0}(x)$, and $g_{VF}(x)$, respectively.
As these sets are semi-algebraic, the sets $\Xx \times \Xx$ and $\Xx_0 \times \Xx_{VF} \times \Xx_{VF}$ are semi-algebraic
with the corresponding vectors $g_{B}$ and $g_{C}$, respectively.
Then the search for an ICC as in Definition~\ref{def:icc_pers} reduces to showing that the following polynomials are sum-of-squares:
\begin{align}
    \label{eq:sos_icc_pers1} 
    & \Tt_0(x, x') - \lambda_{A}^T(x)g_{A}(x), \\
    \label{eq:sos_icc_pers2}
    &\Tt_{i+1}(x, y) - \gamma_i\Tt_i(x', y)-\lambda_{B,i}^T(x,y)g_{B}(x,y) \nonumber\\
    &\qquad \forall i \in \{0,\ldots,k \},\\
    \label{eq:sos_icc_pers3} 
    &\Tt_{k}(x, y) - \gamma_k\Tt_k(x', y)-\lambda_{B,k}^T(x,y)g_{B}(x,y), \text{ and }\\
    \label{eq:sos_icc_pers4}
    & (1-\rho_1)\Tt_k(x, y) - \eta - \Tt_k(x,y') - \rho_2\Tt_k(y,y')  \nonumber \\
    & \quad - \lambda_{C}^T(x,y,y')g_{C}(x,y,y'),
\end{align}
where $x' = f(x)$,  $\eta, \gamma_i, \rho_1, \rho_2 \in\R_{ > 0}$, and multipliers $\lambda_{A}$, $\lambda_{B,i}$, $\lambda_{C}$, are sum-of-squares over the state variable $x$, the state variables $x,y$, and the state variables $x,y,y'$ over the sets $\Xx$, $\Xx \times \Xx$, and $\Xx_0 \times \Xx_{VF} \times \Xx_{VF}$, respectively.

\subsection{LTL Specifications}
\label{subsec:find_ltl}
For a given LTL specification, the relevant NBA has finitely many letters $\sigma \in \Sigma$. Without loss of generality, the set $\Xx$ can be partitioned into finitely many partitions $\Xx_{\sigma_1}, \ldots, \Xx_{\sigma_p}$, where $\forall x \in \Xx_{\sigma_m}$, we have $\Ll(x) = \sigma_m$.
Given an element $\sigma_m \in \Sigma$, we can uniquely characterize the relation $\delta_{\sigma_i}$ as $(q'_i, q_i) \in \delta_{\sigma_i}$ if and only if  $q'_i \in \delta(q_i, \sigma_i)$.
We assume that the sets $\Xx$, $\Xx_{0}$, and $\Xx_{\sigma_m}$ $\forall \sigma_m$, are semi-algebraic and characterized by polynomial vectors of inequalities $g(x)$, $ g_0(x)$, and $g_{\sigma_m, A}(x)$, respectively.
Similarly, we consider polynomial vectors of inequalities $g_{\sigma_m, B}(x, y)$ over the product space $\Xx_{\sigma_m} \times \Xx$, and $g_{C}(x_0, y, y')$ over $\Xx_0 \times \Xx \times \Xx$. 
Now, we can reduce the search for an ICC to showing that the following polynomials are SOS $\forall x, y, y' \in \Xx$, $\forall x_0 \in \Xx_0$, $\forall n, p \in \Qq$, $\forall q_0 \in \Qq_0$, $\forall r, r' \in {\Qq_F}$, and $\forall \sigma_m \in \Sigma$, such that $n' \in \delta_{\sigma_m} (n)$:
\begin{align}
\label{eq:sos_icc_ltl1}
&\Tt_0^{(n,n')}(x, x') - \lambda^T_{\sigma_m, A}(x)g_{\sigma_m, A}(x), \\
\label{eq:sos_icc_ltl2}
& \Tt_{i+1}^{(n,p)}(x, y) - \gamma_i\Tt_i^{(n',p)}(x',y) \nonumber \\
&\quad -\lambda^T_{\sigma_m, B, i}(x, y) g_{\sigma_m, B}(x, y) \quad \forall\ 0\leq i < k,\\ 
\label{eq:sos_icc_ltl3}
&\Tt_{k}^{(n,p)}(x, y) - \gamma_k\Tt_k^{(n',p)}(x',y) \nonumber \\
&\quad -\lambda^T_{\sigma_m, B, k}(x, y) g_{\sigma_m, B}(x, y), \text{ and}  \\
\label{eq:sos_icc_ltl4}
& (1-\rho_1) \Tt_k^{(q_0,r)} (x_0,y) - \eta - \Tt_k^{(q_0,r')}(x_0,y') \nonumber\\
&\quad - \rho_2 \Tt_k^{(r,r')}(y,y') - \lambda^T_{C}(x_0, y, y') g_{C}(x_0, y, y'),   
\end{align}
where $\lambda_{\sigma_m, A}$, $\lambda_{\sigma_m, B, i}$, and $\lambda_{\sigma_m, C}$ are SOS polynomials of appropriate dimensions over their respective regions, and $\rho_1, \rho_2, \gamma_i, \eta \in \R_{> 0}$.

A common tool to search for SOS polynomial certificates is to use solvers such as~\cite{wang2021tssos}.
The complexity of determining whether the equations are SOS is $O\big(k \binom{2n+d}{d} ^2 \big)$~\cite{parrilo2003semidefinite}, when searching for ICCs for safety or persistence, where $k$ indicates the number of functions of the ICC, $n$ is the dimension of the state set, and $2d$ is the degree of the polynomial.
The complexity of verifying LTL specifications is $O\big( k |\Qq|^2 \binom{2n+d}{d} ^2 \big)$, where $|\Qq|$ is the number of states in the automata~\cite{murali2024closure}.
By allowing lower degrees of functions to act as certificates in our interpolation formulation, we suffer linearly in $k$ while alleviating the polynomial complexity of the degree $2d$.
Thus, one may use existing techniques to reduce the computational burden on the search for such certificates.

In the next section, we construct a scenario program to search for ICCs to verify $\omega$-regular properties.

\section{Data-Driven Computation of ICCs}
\label{sec:datadriven}

To adopt a data-driven technique, we construct a scenario program (SP) by collecting a finite set of samples $S = \{ \bar{x}_1, \dots, \bar{x}_N \}$ from our state space as follows. 
We first pick a discretization parameter $\epsilon \in \R_{> 0}$ and cover the state set $\Xx$ by finitely many sets $\Xx_i$ such that $\Xx \subseteq \bigcup_{1 \leq i \leq N} \Xx_i$. 
We select $\bar{x}_i$ from $\Xx_i$ so that for every $x \in \Xx$, we have $\| \bar{x}_i - x \| \leq \epsilon $ for some $1 \leq i \leq N$. 
For example, one can divide $\Xx$ into finitely many hypercubes, each with sides $2 \epsilon$ and select the centers of these hypercubes as sample points. 
We assume that the labeling function $\Ll$ does not change within the hypercubes (i.e., $\Ll(x) = \Ll(\bar{x}_i)\ \forall x \in \Xx_i$).
We use $\bar{\Xx}$ to denote $\Xx \cap S$ and $\bar{\Xx}_0$ to denote $\Xx_0 \cap S$.
Now, we present an SP to search for an ICC of an LTL specification using a robustness parameter $\delta$.

\begin{align}
\label{eq:dd_sp}
\text{SP:} 
\begin{cases}
    \min \limits_{\delta, \eta} & \delta + \eta,\\
    \mathsf{s.t.} & g_0^{(n,n')}(\delta, \bar{x}) \geq 0,\\
    & g_{i}^{(n,n',p)}(\delta, \bar{x}, \bar{y}) \geq 0 \ \forall\ 0\leq i < k,\\
    & g_{k}^{(n,n',p)}(\delta, \bar{x}, \bar{y}) \geq 0,\\
    & g^{(q_0,r,r')}(\delta,\eta, \bar{x}_0, \bar{y}, \bar{y}') \geq 0,\\
    & \bar{x}, \bar{y}, \bar{y}' \in \bar{\Xx}, \bar{x}_0 \in \bar{\Xx}_0, \bar{x}' \in f(\bar{x}),\\
    & \rho_1, \rho_2, \gamma_i, \eta \in \R_{> 0}, \delta \in \R,\\
    &n,n',p \in \Qq, q_0 \in \Qq_0, r,r' \in \Qq_F,
\end{cases}
\end{align}

where, 
\begin{align}
\label{eq:dd_icc_ltl1}
g_0^{(n,n')}(\delta, \bar{x}) &= \Tt_0^{(n,n')}(\bar{x}, \bar{x}') + \delta, \\
\label{eq:dd_icc_ltl2}
g_{i}^{(n,n',p)}(\delta, \bar{x}, \bar{y}) &= \Tt_{i+1}^{(n,p)}(\bar{x}, \bar{y}) - \gamma_i\Tt_i^{(n',p)}(\bar{x}',\bar{y}) + \delta, \\ 
\label{eq:dd_icc_ltl3}
g_{k}^{(n,n',p)}(\delta, \bar{x}, \bar{y}) &= \Tt_{k}^{(n,p)}(\bar{x}, \bar{y}) - \gamma_k\Tt_k^{(n',p)}(\bar{x}',\bar{y}) + \delta, \\
\label{eq:dd_icc_ltl4}
g^{(q_0,r,r')}(\delta,\eta, \bar{x}_0, \bar{y}, \bar{y}') &= (1-\rho_1) \Tt_k^{(q_0,r)} (\bar{x}_0,\bar{y}) - \eta + \delta \nonumber\\
& - \Tt_k^{(q_0,r')}(\bar{x}_0,\bar{y}') - \rho_2 \Tt_k^{(r,r')}(\bar{y},\bar{y}').     
\end{align}

In order to prove the correctness of the SP, we rely on the following assumption.
\begin{assumption}
    We assume functions $g^{(q_0,r,r')}$, $g_0^{(n,n')}$, and $g_i^{(n,n',p)}$ to be Lipschitz-continuous in $\Xx$ with Lipschitz-constants $L^{(q_0,r,r')}$, $L_0^{(n,n')}$, and $L_i^{(n,n',p)}$ for all $0 \leq i \leq k$, respectively, with respect to the $\infty$-norm 
    The maximum of these is denoted by $L'$.
\end{assumption}

The following theorem indicates when a solution to the SP is a valid ICC to verify an LTL specification.
\begin{theorem}
\label{thm:dd_sp}
    Consider a system $\Sys = (\Xx, \Xx_0, f)$ and an LTL formula $\phi$. Let NBA $\Aa$ represent the complement of the specification, \textit{i.e}, $\neg \phi$. Let $\Xx$ be partitioned into $N$ finite covers and the SP be constructed by selecting representative elements from each cover.
    Let the optimal decision variables of the SP be $[\delta^*;\eta^*]$.
    If $L' \epsilon + \delta^* \leq 0$, then the set of functions $\Tt_i: \Xx \times \Qq \times \Xx \times \Qq \to \R$, $\forall i \in \set{0,\ldots,k}$, satisfying condition~\eqref{eq:dd_sp} implies that $\Sys \models_{\Ll} \phi$. 
\end{theorem}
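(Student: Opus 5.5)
The plan is to show that the functions $\Tt_i$ returned by the SP, under $L'\epsilon + \delta^* \leq 0$, satisfy the SOS-style strengthened conditions pointwise on all of $\Xx$ (not just on the samples). Those strengthened conditions imply the logical conditions~\eqref{eq:icc_ltl1}-\eqref{eq:icc_ltl4}. Then Theorem~\ref{thm:icc_ltl} yields $\Sys \models_{\Ll} \phi$.

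\textbf{Step 1: extend each sampled constraint to the whole state set.} Take arbitrary $x, y, y' \in \Xx$ and $x_0 \in \Xx_0$ with automaton states $n, n', p, q_0, r, r'$ as in Definition~\ref{def:icc_ltl}. By the covering construction there are samples $\bar{x}, \bar{y}, \bar{y}' \in \bar{\Xx}$ and $\bar{x}_0 \in \bar{\Xx}_0$ within $\infty$-distance $\epsilon$ of these points. For $g_0^{(n,n')}$, Lipschitz continuity gives
\[
g_0^{(n,n')}(\delta^*, x) \geq g_0^{(n,n')}(\delta^*, \bar{x}) - L'\epsilon \geq -L'\epsilon ,
\]
so $\Tt_0^{(n,n')}(x,x') \geq -\delta^* - L'\epsilon \geq 0$. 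The same argument applies to $g_i^{(n,n',p)}$ and $g_k^{(n,n',p)}$, with the pair $[x;y]$ compared to $[\bar{x};\bar{y}]$ (the $\infty$-norm of the concatenation is at most $\epsilon$). It also applies to $g^{(q_0,r,r')}$, with the triple $[x_0;y;y']$. In every case we obtain $(\text{expression without } \delta) \geq -\delta^* - L'\epsilon \geq 0$.

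Two consequences of the assumption that $\Ll$ is constant on each cell are used here. First, the admissible successor automaton state $n' \in \delta(n,\Ll(x))$ matches the one used at $\bar{x}$, so the sampled constraint indexed by $(n,n')$ is the relevant one. Second, the Lipschitz assumption is stated for the $g$'s as functions of the state arguments, with $x' = f(x)$ built in. This matters because $g$ depends on $x$ both directly and through $x'$; the assumption already absorbs that dependence.

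\textbf{Step 2: from inequalities to implications.} From Step 1 we have $\Tt_{i+1}^{(n,p)}(x,y) \geq \gamma_i \Tt_i^{(n',p)}(x',y)$ with $\gamma_i > 0$. Hence $\Tt_i \geq 0$ at $(x',y)$ implies $\Tt_{i+1} \geq 0$ at $(x,y)$, which is~\eqref{eq:icc_ltl2}; the case $i=k$ gives~\eqref{eq:icc_ltl3} in the same way. For~\eqref{eq:icc_ltl4}, Step 1 gives
\[
\Tt_k(x_0,y') \leq \Tt_k(x_0,y) - \eta^* - \rho_1 \Tt_k(x_0,y) - \rho_2 \Tt_k(y,y').
\]
When both antecedents are nonnegative and $\rho_1, \rho_2 > 0$, the last two terms are nonpositive, so $\Tt_k(x_0,y') \leq \Tt_k(x_0,y) - \eta^*$. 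Finally, $\eta^* > 0$ is enforced by the SP constraint $\eta \in \R_{>0}$. Thus $\{\Tt_i\}$ is an ICC, and Theorem~\ref{thm:icc_ltl} concludes the proof.

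\textbf{Main obstacle.} I expect Step 1 to be the delicate part, in particular the handling of the successor $x'$ and the labeling. If the Lipschitz assumption were meant only for $\Tt_i$, one would additionally need a Lipschitz bound on $f$. I would argue that the stated assumption on the $g$'s (as functions on $\Xx$) already covers this, and would note the role of the labeling assumption explicitly.
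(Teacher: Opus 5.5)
Your proposal is correct and follows essentially the same route as the paper: extend each sampled SP constraint to all of $\Xx$ using the Lipschitz bound and $L'\epsilon+\delta^*\leq 0$, read off conditions~\eqref{eq:icc_ltl1}--\eqref{eq:icc_ltl4}, and conclude with Theorem~\ref{thm:icc_ltl}. You are in fact more careful than the paper, which claims~\eqref{eq:icc_ltl4} follows ``directly'' from~\eqref{eq:dd_icc_ltl4} without any Lipschitz extension and leaves implicit how the $\gamma_i$-, $\rho_1$-, $\rho_2$-weighted inequalities yield the implications; one caveat, shared with the paper, is that the cover must be chosen so that the nearby sample lies in $\bar{\Xx}$ (resp.\ $\bar{\Xx}_0$) and not merely in $S$.
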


\begin{proof}
    From equation~\eqref{eq:dd_icc_ltl4}, one can directly check the satisfaction of condition~\eqref{eq:icc_ltl4}.
    We need to show that conditions~\eqref{eq:icc_ltl1}-\eqref{eq:icc_ltl3} hold.
    For every point $x \in \Xx$, there exists $\bar{x}_j \in S$ such that $\| \bar{x}_j - x \| \leq \epsilon$. For condition~\eqref{eq:icc_ltl1}, since $g_0^{(n,n')}$ are Lipschitz-continuous, we have $\| g_0^{(n,n')}(\delta^*, \bar{x}_j) - g_0^{(n,n')}(\delta^*, x) \| \leq L' \| \bar{x}_j - x\|$.
    Simplifying this inequality, we get $g_0^{(n,n')}(\delta^*, x) \geq g_0^{(n,n')}(\delta^*, \bar{x}_j) - L' \epsilon$.
    $g_0^{(n,n')}(\delta^*, \bar{x}_j) \geq 0$ from the SP.
    If $L' \epsilon \leq -\delta$, we have $g_0^{(n,n')}(\delta^*, x) \geq 0$ for all $x \in \Xx$. Similarly, for conditions~\eqref{eq:icc_ltl2}-\eqref{eq:icc_ltl3}, $\| g_i^{(n,n',p)}(\delta^*, \bar{x}_j, \bar{y}_j) - g_i^{(n,n',p)}(\delta^*, x, y) \| \leq L' \| [\bar{x}_j; \bar{y}_j] - [x;y]\|$. 
    It follows from simplifying these inequalities that if $L' \epsilon + \delta^* \leq 0$, $g_i^{(n,n',p)}(\delta^*, x, y) \geq 0$ for all $x, y \in \Xx$ and $0 \leq i \leq k$. 
    This proves that the set of functions $\Tt_i$ is a valid ICC to verify the LTL specifications.  
\end{proof}

Note that the above condition requires $\delta^* \leq 0$. We can now use linear programming solvers such as Gurobi~\cite{gurobi} to search for the functions per the SP in~\eqref{eq:dd_sp}. 
Assuming that the bilinear variables are fixed, one can then adopt the method discussed in~\cite{wood1996estimation}
to estimate the Lipschitz-constants of all the above functions and check for the condition per Theorem~\ref{thm:dd_sp}.

\section{Case Studies}
\label{sec:case_studies}
This section covers the case studies used to demonstrate the utility of ICCs for persistence and LTL specification. Given a system with state $x = x(t)$, we denote the next state using $x' = x(t+1) = f(x)$. The simulations were conducted on a Windows 11 device equipped with an AMD Ryzen 9 4900HS Processor and 16GB of RAM. We used TSSOS \cite{wang2021tssos} in Julia to implement the SOS formulations provided in Section \ref{sec:synth}.
The monomials and corresponding coefficients for the ICC functions can be found in Appendix~\ref{appendix:casestudies}.

\subsection{Persistence}
\label{subsec:persistence_casestudy}

\begin{figure}[!t]
    \centering
    \epsfig{file=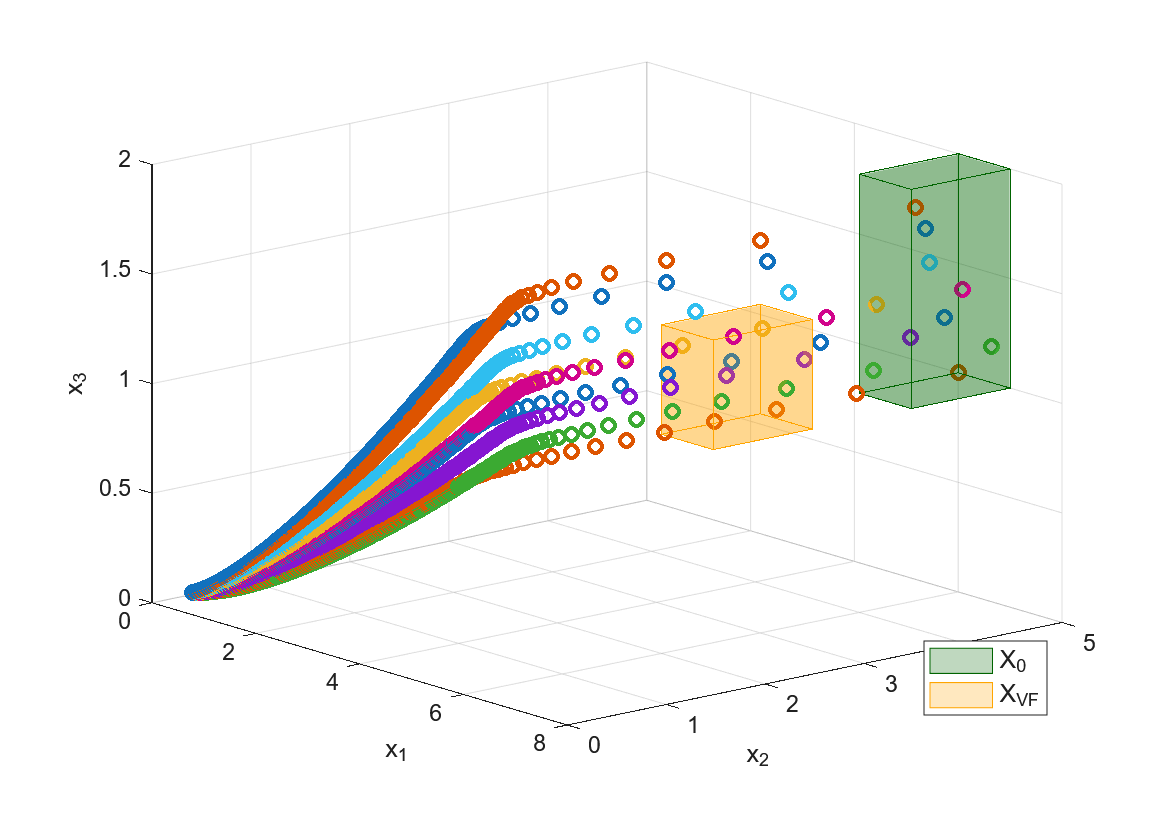, width=0.5\textwidth, keepaspectratio}
  \caption{Sample state trajectories of Lotka-Volterra dynamic model. Set $\Xx_{VF}$ is only visited finitely often.}
  \label{fig:casestudy_pers}
\end{figure} 

We experimentally demonstrate the utility of  ICCs for verifying persistence in a 3-state Lotka-Volterra type prey-predator model with states $x = [x_1, x_2, x_3]^T$.

The dynamics of our model is given by the following difference equations:
\begin{align}
&\begin{cases}
    x_1' = x_1 + T(\alpha x_2 - \beta_1 x_1),\\
    x_2' = x_2 + T(\theta x_1 - \theta x_2^2 - \psi x_2 x_3 - \beta_2 x_2),\\
    x_3' = x_3 + T(\theta x_2 - \beta_3 x_3),
\end{cases}
\end{align}
where $T = 0.01s$, $\alpha = 1.6$, $\beta_1 = 0.38$, $\beta_2 = 0.06$, $\beta_3 = 0.5$, $\theta = 0.3$, and $\psi = 20$. 
Observe that there is a jump in the trajectory of the state $x_1$ from the initial set $\Xx_0$ to the next step.
The state set, initial set, and finitely visited state set are given by $\Xx = [0,7]\times[0,5]\times[0,2]$, $\Xx_0 = [6,7]\times[4,5]\times[1,2]$, and $\Xx_{VF} = [6,7]\times[2,3]\times[1,1.5]$, respectively. 
We vary $k$ from $0$ up to $k_{max} = 2$ and the degree of the polynomial template from $1$ up to $degree_{max} = 5$, above which the SOS fails to compute due to device memory constraints. 
Note that a result with $k = 0$ is a standard closure certificate.
We reformulate conditions \eqref{eq:icc_pers1}-\eqref{eq:icc_pers4} as SOS optimization problem as described in Section \ref{subsec:find_persistence} with $\eta = 0.01$, $\gamma_i = \rho_1 = \rho_2 = 1$. 
We were unable to find a standard closure certificate. 
However, we found an ICC using a degree four polynomial template in two state variables $x,y$ and $k = 2$.  
Figure \ref{fig:casestudy_pers} shows sample runs of the dynamical system that confirm $\Xx_{VF}$ is visited only finitely often. 
It took 133s to find the ICC and 492s to try up to degree five CCs.

\subsection{LTL Specification}
\label{subsec:ltl_casestudy}

As a second case study, we exhibit the utility of  ICCs to verify LTL specifications in a heat transfer model that determines the temperatures $x = [x_1, x_2]^T$ in two adjacent rooms of a house. The dynamic model is given by the difference equations:
\begin{align}
&\begin{cases}
    x_1' = x_1 + \alpha - \beta (x_1 - \kappa) - \nu (x_1 - x_2),\\
    x_2' = x_2 + \zeta - \xi (x_2 - \kappa) - \theta (x_2 - x_1),
\end{cases}
\end{align}
where constants $\alpha = -4.9$, and $\zeta = -3.23$ are the heat contributions of a cooler in each room, $\beta = 0.49$, $\nu = 0.5$, $\xi = 0.323$, and $\theta = 0.667$ represent the conduction factors, and $\kappa = 10^{\circ} C$ is the outside temperature. The
set $\Xx = [0, 12] \times [0, 10] \in \R^{2}$ indicates the temperature of the two rooms, and $\Xx_0 = [10,12] \times [8,10]$ indicates the initial states.

Let the LTL formula to be verified be $a\ \mathsf{U}\ \mathsf{G} c$.
This property requires that a system that starts from a state labeled with an atomic proposition $a$ stays there until it reaches and always stays in states labeled with an atomic proposition $c$. The complement of this specification is denoted by an NBA $\Aa$ in Figure~\ref{fig:ltl_casestudy_nba}. We introduce two additional labels: $b$ for when the state is labeled with both $a$ and $c$, and $d$ for when the state is labeled with neither. Explicitly, the labeling map $\Ll : \Xx \rightarrow \Sigma$ is given by:
\[
\Ll(x) = 
\begin{cases}
    a \quad \text{if } x \in [0, 12] \times [7, 10],\\
    c \quad \text{if } x \in [0, 10] \times [0, 8],\\
    b \quad \text{if } x \in [0, 10] \times [7, 8],\\
    d \quad \text{otherwise}.
\end{cases}
\]

We reformulate LTL conditions as an SOS optimization problem as described in Section \ref{subsec:find_ltl} with $\eta = 0.1$, $\rho_1 = 0.5$, $\gamma_i = \rho_2 = 1$. 
We vary $k$ from $0$ up to $k_{max} = 3$ and the degree of the polynomial template from $1$ up to $degree_{max} = 6$.
We were unsuccessful in finding a standard closure certificate up to degree six. 
However, we found the piecewise components of the ICC using a degree three polynomial template and $k = 2$.
It took 33s to find the ICC for the product system and 52s to try up to degree six CCs.

\begin{figure}[t]
    \centering
    \begin{tikzpicture}[node distance =1.5cm]
    \node[initial, accepting, state, draw, initial text =,fill=blue!10!white] (1) at (0,0) {$q_1$};
    \node[,state, fill=blue!10!white,] (2) at (2,0) {$q_2$};
    \node[state, fill = blue!10!white] (3) at (2, -2) {$q_3$};
    \node[accepting,state, fill=blue!10!white,] (4) at (4,0) {$q_4$};
    \path[->]
    (1) edge[loop above] node[above]{$a$} (1)
    (1) edge[bend left] node[above]{$b$} (2)
    (1) edge[] node[below]{$c$} (3)
    (1) edge[bend left = 2.3cm] node[above]{$d$} (4)
    (2) edge[bend left] node[above]{$a$} (1)
    (2) edge[loop above] node[right=0.1cm]{$b$} (2)
    (2) edge[] node[left]{$c$} (3)
    (2) edge[] node[above]{$d$} (4)
    (3) edge[loop left] node[left]{$b,c$} (3)
    (3) edge[] node[right]{$a,d$} (4)
    (4) edge[loop right] node[right]{$\top$} (4);   
    \end{tikzpicture}
    \caption{An NBA $\Aa$ for the two-room temperature case study from Section~\ref{subsec:ltl_casestudy}. 
    The automata represents the LTL formula $\neg( a\ \mathsf{U}\ \mathsf{G} c)$.
    $\top$ indicates any letter in the alphabet.}
    \label{fig:ltl_casestudy_nba}
\end{figure}
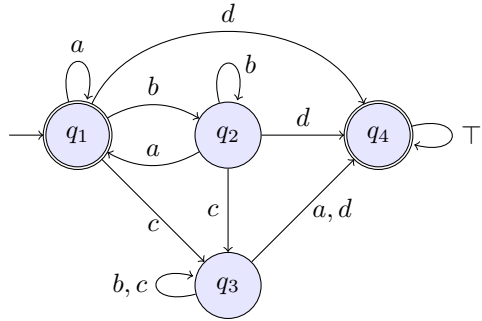

\section{Conclusion}

We proposed a notion of interpolation-inspired closure certificates (ICCs) that relaxes the conditions of a closure certificate (CC) by incrementally finding functions that together verify safety, persistence, or general LTL specifications for a given dynamical system.
We presented how to compute these certificates using SOS programming and scenario programs, and showed the potential computational advantage of using ICCs over standard CCs. 
In future work, we intend to explore how approaches such as $k$-induction~\cite{wahl2013k} and vector closure certificates can be used to allow a larger class of functions to act as closure certificates. We also plan to investigate the use of ICCs in synthesizing controllers.

\bibliographystyle{alpha}
\bibliography{ref.bib}

\newpage
\clearpage
\appendix

\section{Case Study Results}
\label{appendix:casestudies}

\topcaption{ICC function parameters for verifying persistence.
\label{tab:coeffs_persistence}}
\begin{supertabular}{|p{2em}|p{0.87\textwidth}|}
  \hline
  $\mathbf{p}$ &
  [$1$, $y_{3}$, $y_{2}$, $y_{1}$, $x_{3}$, $x_{2}$, $x_{1}$, $y_{3}^{2}$, $y_{2}y_{3}$, $y_{2}^{2}$, $y_{1}y_{3}$, $y_{1}y_{2}$, $y_{1}^{2}$, $x_{3}y_{3}$, $x_{3}y_{2}$, $x_{3}y_{1}$, $x_{3}^{2}$, $x_{2}y_{3}$, $x_{2}y_{2}$, $x_{2}y_{1}$, $x_{2}x_{3}$, $x_{2}^{2}$, $x_{1}y_{3}$, $x_{1}y_{2}$, $x_{1}y_{1}$, $x_{1}x_{3}$, $x_{1}x_{2}$, $x_{1}^{2}$, $y_{3}^{3}$, $y_{2}y_{3}^{2}$, $y_{2}^{2}y_{3}$, $y_{2}^{3}$, $y_{1}y_{3}^{2}$, $y_{1}y_{2}y_{3}$, $y_{1}y_{2}^{2}$, $y_{1}^{2}y_{3}$, $y_{1}^{2}y_{2}$, $y_{1}^{3}$, $x_{3}y_{3}^{2}$, $x_{3}y_{2}y_{3}$, $x_{3}y_{2}^{2}$, $x_{3}y_{1}y_{3}$, $x_{3}y_{1}y_{2}$, $x_{3}y_{1}^{2}$, $x_{3}^{2}y_{3}$, $x_{3}^{2}y_{2}$, $x_{3}^{2}y_{1}$, $x_{3}^{3}$, $x_{2}y_{3}^{2}$, $x_{2}y_{2}y_{3}$, $x_{2}y_{2}^{2}$, $x_{2}y_{1}y_{3}$, $x_{2}y_{1}y_{2}$, $x_{2}y_{1}^{2}$, $x_{2}x_{3}y_{3}$, $x_{2}x_{3}y_{2}$, $x_{2}x_{3}y_{1}$, $x_{2}x_{3}^{2}$, $x_{2}^{2}y_{3}$, $x_{2}^{2}y_{2}$, $x_{2}^{2}y_{1}$, $x_{2}^{2}x_{3}$, $x_{2}^{3}$, $x_{1}y_{3}^{2}$, $x_{1}y_{2}y_{3}$, $x_{1}y_{2}^{2}$, $x_{1}y_{1}y_{3}$, $x_{1}y_{1}y_{2}$, $x_{1}y_{1}^{2}$, $x_{1}x_{3}y_{3}$, $x_{1}x_{3}y_{2}$, $x_{1}x_{3}y_{1}$, $x_{1}x_{3}^{2}$, $x_{1}x_{2}y_{3}$, $x_{1}x_{2}y_{2}$, $x_{1}x_{2}y_{1}$, $x_{1}x_{2}x_{3}$, $x_{1}x_{2}^{2}$, $x_{1}^{2}y_{3}$, $x_{1}^{2}y_{2}$, $x_{1}^{2}y_{1}$, $x_{1}^{2}x_{3}$, $x_{1}^{2}x_{2}$, $x_{1}^{3}$, $y_{3}^{4}$, $y_{2}y_{3}^{3}$, $y_{2}^{2}y_{3}^{2}$, $y_{2}^{3}y_{3}$, $y_{2}^{4}$, $y_{1}y_{3}^{3}$, $y_{1}y_{2}y_{3}^{2}$, $y_{1}y_{2}^{2}y_{3}$, $y_{1}y_{2}^{3}$, $y_{1}^{2}y_{3}^{2}$, $y_{1}^{2}y_{2}y_{3}$, $y_{1}^{2}y_{2}^{2}$, $y_{1}^{3}y_{3}$, $y_{1}^{3}y_{2}$, $y_{1}^{4}$, $x_{3}y_{3}^{3}$, $x_{3}y_{2}y_{3}^{2}$, $x_{3}y_{2}^{2}y_{3}$, $x_{3}y_{2}^{3}$, $x_{3}y_{1}y_{3}^{2}$, $x_{3}y_{1}y_{2}y_{3}$, $x_{3}y_{1}y_{2}^{2}$, $x_{3}y_{1}^{2}y_{3}$, $x_{3}y_{1}^{2}y_{2}$, $x_{3}y_{1}^{3}$, $x_{3}^{2}y_{3}^{2}$, $x_{3}^{2}y_{2}y_{3}$, $x_{3}^{2}y_{2}^{2}$, $x_{3}^{2}y_{1}y_{3}$, $x_{3}^{2}y_{1}y_{2}$, $x_{3}^{2}y_{1}^{2}$, $x_{3}^{3}y_{3}$, $x_{3}^{3}y_{2}$, $x_{3}^{3}y_{1}$, $x_{3}^{4}$, $x_{2}y_{3}^{3}$, $x_{2}y_{2}y_{3}^{2}$, $x_{2}y_{2}^{2}y_{3}$, $x_{2}y_{2}^{3}$, $x_{2}y_{1}y_{3}^{2}$, $x_{2}y_{1}y_{2}y_{3}$, $x_{2}y_{1}y_{2}^{2}$, $x_{2}y_{1}^{2}y_{3}$, $x_{2}y_{1}^{2}y_{2}$, $x_{2}y_{1}^{3}$, $x_{2}x_{3}y_{3}^{2}$, $x_{2}x_{3}y_{2}y_{3}$, $x_{2}x_{3}y_{2}^{2}$, $x_{2}x_{3}y_{1}y_{3}$, $x_{2}x_{3}y_{1}y_{2}$, $x_{2}x_{3}y_{1}^{2}$, $x_{2}x_{3}^{2}y_{3}$, $x_{2}x_{3}^{2}y_{2}$, $x_{2}x_{3}^{2}y_{1}$, $x_{2}x_{3}^{3}$, $x_{2}^{2}y_{3}^{2}$, $x_{2}^{2}y_{2}y_{3}$, $x_{2}^{2}y_{2}^{2}$, $x_{2}^{2}y_{1}y_{3}$, $x_{2}^{2}y_{1}y_{2}$, $x_{2}^{2}y_{1}^{2}$, $x_{2}^{2}x_{3}y_{3}$, $x_{2}^{2}x_{3}y_{2}$, $x_{2}^{2}x_{3}y_{1}$, $x_{2}^{2}x_{3}^{2}$, $x_{2}^{3}y_{3}$, $x_{2}^{3}y_{2}$, $x_{2}^{3}y_{1}$, $x_{2}^{3}x_{3}$, $x_{2}^{4}$, $x_{1}y_{3}^{3}$, $x_{1}y_{2}y_{3}^{2}$, $x_{1}y_{2}^{2}y_{3}$, $x_{1}y_{2}^{3}$, $x_{1}y_{1}y_{3}^{2}$, $x_{1}y_{1}y_{2}y_{3}$, $x_{1}y_{1}y_{2}^{2}$, $x_{1}y_{1}^{2}y_{3}$, $x_{1}y_{1}^{2}y_{2}$, $x_{1}y_{1}^{3}$, $x_{1}x_{3}y_{3}^{2}$, $x_{1}x_{3}y_{2}y_{3}$, $x_{1}x_{3}y_{2}^{2}$, $x_{1}x_{3}y_{1}y_{3}$, $x_{1}x_{3}y_{1}y_{2}$, $x_{1}x_{3}y_{1}^{2}$, $x_{1}x_{3}^{2}y_{3}$, $x_{1}x_{3}^{2}y_{2}$, $x_{1}x_{3}^{2}y_{1}$, $x_{1}x_{3}^{3}$, $x_{1}x_{2}y_{3}^{2}$, $x_{1}x_{2}y_{2}y_{3}$, $x_{1}x_{2}y_{2}^{2}$, $x_{1}x_{2}y_{1}y_{3}$, $x_{1}x_{2}y_{1}y_{2}$, $x_{1}x_{2}y_{1}^{2}$, $x_{1}x_{2}x_{3}y_{3}$, $x_{1}x_{2}x_{3}y_{2}$, $x_{1}x_{2}x_{3}y_{1}$, $x_{1}x_{2}x_{3}^{2}$, $x_{1}x_{2}^{2}y_{3}$, $x_{1}x_{2}^{2}y_{2}$, $x_{1}x_{2}^{2}y_{1}$, $x_{1}x_{2}^{2}x_{3}$, $x_{1}x_{2}^{3}$, $x_{1}^{2}y_{3}^{2}$, $x_{1}^{2}y_{2}y_{3}$, $x_{1}^{2}y_{2}^{2}$, $x_{1}^{2}y_{1}y_{3}$, $x_{1}^{2}y_{1}y_{2}$, $x_{1}^{2}y_{1}^{2}$, $x_{1}^{2}x_{3}y_{3}$, $x_{1}^{2}x_{3}y_{2}$, $x_{1}^{2}x_{3}y_{1}$, $x_{1}^{2}x_{3}^{2}$, $x_{1}^{2}x_{2}y_{3}$, $x_{1}^{2}x_{2}y_{2}$, $x_{1}^{2}x_{2}y_{1}$, $x_{1}^{2}x_{2}x_{3}$, $x_{1}^{2}x_{2}^{2}$, $x_{1}^{3}y_{3}$, $x_{1}^{3}y_{2}$, $x_{1}^{3}y_{1}$, $x_{1}^{3}x_{3}$, $x_{1}^{3}x_{2}$, $x_{1}^{4}$]$^T$
  \\
  \hline
  $\mathbf{c}_0$ & [546.774, 97.77, 101.115, 78.106, 19.979, -7.635, -66.302, -1.058, -52.48, 14.107, -118.303, -241.636, -60.813, 170.9, 36.16, -15.847, 291.543, 16.21, 385.094, -34.412, -34.105, -55.251, -25.127, -58.254, 130.89, -69.873, -4.746, -56.542, -78.507, -20.038, -124.717, -382.773, 19.032, -34.597, 7.232, -46.126, -13.919, -67.051, 56.092, 50.142, -162.988, 76.588, -12.671, -10.887, 95.613, -13.31, 11.728, -72.48, -52.317, 67.452, 362.652, 32.636, 142.832, -58.813, 37.507, 106.7, 50.803, -32.529, 50.256, 149.356, -65.567, 56.252, -247.807, -29.615, -36.317, -48.646, 18.016, 72.422, 62.051, 63.178, $-14.161$, 94.46, -2.489, 31.202, 150.245, 48.076, 48.539, -61.793, -51.814, -25.639, 84.498, -8.705, -30.213, -67.088, -304.301, -122.726, -352.426, -154.53, 22.712, -81.41, -77.668, 0.167, -41.64, -275.009, -70.695, -280.089, -92.85, -177.307, -580.874, 200.008, 28.901, 111.665, -148.26, 80.328, 84.398, -5.592, 312.252, -21.682, -113.214, 219.06, 60.984, -249.231, 90.261, -67.293, -262.267, 286.904, -82.628, -121.264, -258.626, -84.85, -28.319, 21.626, -90.973, -57.624, -35.376, 206.229, -6.917, 76.479, -135.192, 96.857, 247.832, 116.085, 104.619, -55.056, -33.591, 126.115, 99.056, -90.957, -63.048, -161.603, -40.673, -93.257, 56.469, -249.913, -199.253, 263.791, 59.033, -5.576, -31.226, 61.937, 326.432, 85.057, 66.612, -139.596, -100.272, -53.704, 13.465, -38.843, 5.036, 24.587, 282.572, 94.85, 181.657, 328.28, 49.931, 91.555, -17.189, 493.447, 76.444, 108.942, 64.687, -71.689, 17.813, -135.858, -61.963, -29.947, 182.905, 101.52, 319.473, 167.892, 91.42, -51.493, 81.217, -109.47, 55.187, -245.915, 207.058, -33.263, 69.851, -271.195, -58.376, -264.747, 87.986, 171.788, 491.666, 309.536, -13.202, 106.842, -271.473, -9.653, 88.802, 139.942, -38.702, -211.218, -89.546, -191.039, 338.751, -117.011, -165.714, -577.504]$^T$\\ 
  \hline
  $\mathbf{c}_1$ & [623.663, 110.37, 94.691, 107.924, 31.128, 14.544, -53.92, 95.111, -149.301, 238.736, $-169.908$, -423.932, -1.466, 176.372, $-4.731$, -17.365, 510.305, 19.928, 206.544, -8.425, $-22.444$, 98.078, -41.067, -104.044, 56.175, -108.661, 5.835, -1.004, -57.747, 26.133, $-68.07$, -174.315, 63.063, -58.03, 22.678, $-30.964$, -0.746, -25.477, 56.904, 55.874, $-102.525$, 73.959, -19.794, -12.821, 83.158, 33.27, 32.395, -74.308, -27.341, 56.601, 177.904, 21.916, 119.09, -35.782, 18.845, 29.668, 45.243, -35.31, 36.402, -28.557, $-19.874$, 70.307, -108.357, -3.092, -45.645, $-30.97$, -8.118, 7.602, 29.435, 63.553, $-18.762$, 49.406, 25.367, 22.786, 127.706, 25.259, 52.145, -34.485, -45.344, -31.904, 52.727, -7.975, -5.17, -34.94, -175.973, $-70.644$, -203.612, -82.567, -3.129, -29.897, $-44.092$, -11.881, -15.616, -132.963, $-44.954$, -136.44, -42.805, -70.499, -272.985, 101.274, 22.264, 27.237, -65.406, 27.337, 24.242, -15.235, 149.632, -15.171, -51.971, 179.359, 41.826, -95.611, 26.464, -20.156, -105.775, 173.149, -40.004, -67.08, -124.956, -27.749, -24.292, 34.617, -32.021, -15.561, 13.905, 111.97, 1.475, 16.926, -32.294, 64.087, 74.326, 55.479, 10.513, -20.362, 8.444, 29.343, -18.226, -59.055, -29.699, -46.338, -34.412, -30.054, 5.027, -125.784, -53.507, 46.506, -81.929, -19.602, -11.092, 20.844, 120.813, 24.103, 130.904, -32.296, -49.603, -11.038, 4.161, -17.247, 5.981, 15.913, 155.667, 49.239, 81.62, 166.508, -8.755, 32.072, -34.382, 220.341, 16.59, 43.995, 0.533, -26.538, 10.77, -74.689, -26.24, 13.173, 78.228, 44.65, 121.984, 51.042, 4.954, -10.683, 22.878, -74.344, 12.585, -103.391, 75.739, -36.8, 12.823, -124.82, -33.815, -115.761, 37.09, 64.368, 201.112, 146.056, -5.036, 40.558, -115.141, -2.791, 12.669, 40.01, 14.888, -56.995, -39.778, -78.232, 173.295, -48.341, -63.121, -268.259]$^T$\\ 
  \hline
  $\mathbf{c}_2$ & [901.99, 202.133, 51.557, 290.213, 58.409, $-38.713$, -26.588, 166.025, -391.969, 667.862, -270.787, -1111.486, 184.282, -5.108, 1.075, 8.359, 2511.517, 12.664, 9.238, 3.889, $-56.089$, 246.634, 7.677, 5.762, 1.444, $-850.565$, -20.51, 90.488, -77.464, 52.971, 42.436, -35.287, 84.779, -221.439, 104.513, -8.148, -5.125, 2.2, 2.288, 1.411, 1.762, 0.06, 1.334, 2.514, 125.382, 78.399, 16.729, $-399.506$, -5.693, -1.718, -2.679, -0.427, -0.525, -1.451, -7.457, -20.373, -29.966, 110.376, -50.007, -43.822, -4.812, -77.447, 149.255, -4.064, -0.89, -1.852, 0.242, -0.095, -0.957, -21.366, -21.171, -21.639, 33.27, -6.698, 0.699, -2.593, 10.205, -56.812, -7.013, -5.501, 0.743, 24.892, 17.131, 3.864, -49.519, -33.509, -73.406, -19.597, -22.459, 24.355, 25.224, 23.63, 17.785, -19.918, -14.198, -14.739, 4.06, 3.638, -1.462, -0.005, 0.104, 0.017, 0.01, 0.313, -0.106, -0.144, -0.047, -0.305, -0.666, 120.102, 47.737, 78.733, $-32.547$, -27.033, 14.055, 1.892, -5.638, -5.119, -97.484, 0.0, -0.018, -0.003, -0.005, -0.061, 0.015, -0.034, 0.023, 0.066, 0.072, 5.56, 1.058, 1.703, -0.343, -1.626, 0.759, -10.23, -4.742, 2.565, -4.506, 25.445, 10.451, 15.459, 1.329, 0.747, 4.953, 12.106, 4.023, 1.34, -26.093, 3.894, 3.534, -6.407, 5.437, -24.317, 0.012, -0.025, -0.005, -0.003, -0.065, -0.0, -0.013, -0.004, 0.069, 0.125, -41.513, -17.306, -26.272, 9.2, 6.842, -5.143, -1.712, -0.196, -0.196, 67.736, 1.017, -0.911, -1.042, 0.529, 1.069, 0.568, 1.572, 0.877, -2.777, -6.228, 0.067, 0.654, 2.938, 2.722, 5.555, 7.79, 2.439, 4.182, -0.983, -0.644, 0.929, -0.264, -0.032, 0.818, -27.163, 0.197, 0.165, -0.107, 3.118, -0.569, -0.002, 0.005, -0.494, 6.133, -1.907, -0.983]$^T$\\
  \hline
\end{supertabular}

\topcaption{ICC function parameters for verifying LTL specification.\label{tab:coeffs_ltl}}
\begin{supertabular}{|p{2em}|p{0.87\textwidth}|} 
  \hline
  $\mathbf{p}$ &
  [$1$, $x_1$, $x_2$, $y_1$, $y_2$, $x_1^2$, $x_1x_2$, $x_1y_1$, $x_1y_2$, $x_2^2$, $x_2y_1$, $x_2y_2$, $y_1^2$, $y_1y_2$, $y_2^2$, $x_1^3$, $x_1^2x_2$, $x_1^2y_1$, $x_1^2y_2$, $x_1x_2^2$, $x_1x_2y_1$, $x_1x_2y_2$, $x_1y_1^2$, $x_1y_1y_2$, $x_1y_2^2$, $x_2^3$, $x_2^2y_1$, $x_2^2y_2$, $x_2y_1^2$, $x_2y_1y_2$, $x_2y_2^2$, $y_1^3$, $y_1^2y_2$, $y_1y_2^2$, $y_2^3$]$^T$ \\
  \hline
  $\mathbf{c}_0^{(1,1)}$ & [-6977.738, -2746.677, -1319.269, -674.295, -1239.944, -3146.478, -4851.562, -1547.883, 587.666, -2426.639, 1352.32, -1736.657, -5490.082, -687.797, -5498.282, -351.713, 173.071, 7.817, 586.948, 623.665, 390.322, 777.13, -64.992, 179.126, 13.021, 246.227, 591.146, 279.392, 201.586, 249.643, 154.141, -99.157, -87.349, -94.903, -87.917]$^T$\\ 
  \hline
  $\mathbf{c}_0^{(1,2)}$ & [-292.282, 24.085, 158.956, 9.865, -15.286, 1108.053, -40.135, 50.494, 980.63, 800.006, 852.045, 54.606, -2289.561, 17.713, -2217.997, -254.087, -314.746, 143.68, 10.946, -267.033, -118.376, -172.042, 726.589, 94.782, 663.168, -231.586, 144.891, 173.568, 847.778, 128.103, 822.231, -71.371, -54.358, -73.492, -106.753]$^T$\\ 
  \hline
  $\mathbf{c}_0^{(1,3)}$ & [1241.579, 52.418, 63.585, -20.061, -6.36, 1638.944, -46.67, 51.374, 1303.645, 1530.794, 1120.19, 53.751, -2169.755, 20.004, -2040.432, -239.218, -316.117, 139.646, -0.453, -255.957, -120.539, -188.944, 725.512, 86.942, 654.37, -241.442, 115.394, 158.75, 838.0, 122.242, 818.39, -67.124, -55.43, -75.292, -115.837]$^T$\\ 
  \hline
  $\mathbf{c}_0^{(1,4)}$ & [-7202.852, -2451.871, -3391.158, -969.081, -809.317, -4266.426, -4562.498, -1434.421, 445.642, -3340.438, 1084.573, -1742.485, -4741.756, -472.843, -4642.887, 205.271, 281.931, 136.0, 817.373, 120.712, 243.605, 541.103, -166.727, 120.607, -72.319, 40.594, 177.529, 266.478, -121.0, 150.993, -81.755, -75.196, -61.891, -69.846, -63.521]$^T$\\ 
  \hline
  $\mathbf{c}_0^{(2,1)}$ & [-6929.371, -2791.231, -1421.378, -716.081, -1279.742, -2913.371, -4831.396, -1611.09, 465.927, -2457.124, 1181.062, -1804.105, -5335.054, -736.425, -5349.332, -347.895, 131.66, 18.15, 579.695, 642.443, 379.194, 777.284, -53.618, 182.747, 23.729, 263.432, 589.543, 278.202, 206.036, 247.69, 161.458, $-98.51$, -86.804, -93.93, -87.524]$^T$\\ 
  \hline
  $\mathbf{c}_0^{(2,2)}$ & [-169.738, 37.705, 143.941, 4.044, -15.938, 1164.465, -40.578, 51.702, 855.172, 886.178, 727.483, 53.51, -2051.28, 16.369, -1982.704, -250.597, -329.816, 172.341, 1.712, -269.901, -124.064, -183.383, 736.428, 98.003, 668.563, -238.2, 145.297, 176.287, 851.837, 128.072, 826.479, -69.852, -52.858, -71.629, -105.58]$^T$\\ 
  \hline
  $\mathbf{c}_0^{(2,3)}$ & [1310.085, 57.104, 63.226, -20.12, -8.74, 1674.561, -43.12, 54.416, 1167.903, 1603.105, 988.018, 54.264, -1932.466, 19.173, -1807.163, -237.982, -328.298, 172.009, -6.929, -258.304, -125.583, -199.342, 735.457, 91.161, 660.573, -244.309, 120.976, 164.244, 843.301, 123.096, 822.758, -65.498, -53.845, -73.34, -114.604]$^T$\\ 
  \hline
  $\mathbf{c}_0^{(2,4)}$ & [-7108.847, -2460.835, -3442.892, -984.031, -835.257, -4032.946, -4501.074, -1485.285, 299.871, -3230.849, 1005.124, -1777.643, -4552.001, -499.204, -4484.624, 185.934, 266.887, 150.404, 812.267, 124.213, 226.491, 546.575, -157.556, 123.29, -62.104, 42.681, 172.776, 261.748, -116.048, 147.575, -75.315, -74.39, -61.42, -68.801, -63.036]$^T$\\ 
  \hline
  $\mathbf{c}_0^{(3,1)}$ & [-15467.353, -613.564, -549.224, -2002.658, -2149.028, -11520.872, -1954.158, -537.787, -432.616, -11592.03, -640.65, -684.485, $-7940.58$, -853.286, -7866.848, -1076.598, $-1205.007$, -334.695, 20.791, 943.674, 247.202, 157.05, 106.363, 101.064, 124.919, -27.344, 395.717, 174.76, 267.317, 213.679, 220.333, -84.223, -66.17, -69.712, -77.021]$^T$\\ 
  \hline
  $\mathbf{c}_0^{(3,2)}$ & [-2696.648, 101.736, 118.456, -15.425, $-20.741$, -1374.885, -18.947, 47.964, 155.917, -1559.032, 182.855, 39.213, -3479.037, -12.293, -3456.857, -12.121, -188.341, 191.448, 113.313, -86.926, -72.674, -125.455, 814.675, 127.546, 771.069, -155.542, 213.229, 210.818, 904.906, 148.631, 866.015, -63.495, -47.239, -62.556, -99.378]$^T$\\ 
  \hline
  $\mathbf{c}_0^{(3,3)}$ & [734.672, 102.446, 98.797, -49.185, $-44.705$, 1647.836, -59.972, 70.777, 1632.993, 1344.739, 1384.314, 65.94, -3231.261, 6.433, -3134.426, -233.562, -326.534, 118.207, -54.537, -255.757, -143.981, -196.53, 805.194, 88.322, 723.452, -244.39, 96.69, 138.786, 867.095, 118.626, 855.804, -62.548, -51.674, -68.449, -112.069]$^T$\\ 
  \hline
  $\mathbf{c}_0^{(3,4)}$ & [-25323.463, 2469.561, 2206.562, -1121.392, -1200.824, -7622.387, -3914.989, -1134.334, 2777.826, -6970.0, 2126.608, -1283.033, $-5903.2$, -398.361, -5633.622, 493.704, 351.343, 128.983, 622.491, 546.982, 312.928, 325.1, -36.382, 74.158, -30.43, 853.743, 587.933, 219.629, -16.528, 120.092, -50.238, -58.825, -47.968, -61.839, -58.12]$^T$\\ 
  \hline
  $\mathbf{c}_0^{(4,1)}$ & [-17566.077, -636.013, -597.042, -2630.3, $-2751.161$, -12980.515, -1357.71, -475.023, -470.547, -13028.849, -811.465, -724.624, $-9241.957$, -1061.474, -9166.107, -1519.85, $-1467.421$, -305.056, -46.485, 773.719, 148.113, 69.66, 227.421, 122.938, 225.988, 9.728, 424.347, 226.629, 354.338, 231.204, 314.501, -76.892, -50.758, -50.046, -71.074]$^T$\\ 
  \hline
  $\mathbf{c}_0^{(4,2)}$ & [-3224.071, 191.459, 201.657, -6.073, -17.888, -844.552, 2.682, 72.794, 130.234, -1093.157, 144.267, 53.004, -3909.04, -9.871, -3900.571, 79.059, -114.074, 260.217, 181.745, -66.798, $-96.253$, -138.929, 886.569, 139.048, 843.207, -104.432, 252.78, 239.727, 964.419, 160.278, 923.266, -58.187, -42.702, -54.984, -94.636]$^T$\\ 
  \hline
  $\mathbf{c}_0^{(4,3)}$ & [-1552.069, 214.156, 222.183, 0.858, -6.062, -397.992, 16.93, 87.038, 329.945, -254.785, 373.574, 69.419, -3790.916, -2.505, -3743.848, 26.383, -125.699, 256.435, 151.683, -122.051, -115.556, -155.454, 884.766, 133.914, 835.577, -106.842, 227.503, 218.03, 956.437, 154.174, 919.941, -53.913, -43.739, -56.412, -103.431]$^T$\\ 
  \hline
  $\mathbf{c}_0^{(4,4)}$ & [0.38, 2.055, 0.359, 4.034, -1.01, -2259.814, -112.241, 84.032, 6787.465, -1358.322, 5441.623, 84.327, -5461.483, 41.029, $-5099.861$, -67.482, -40.536, -3.918, 180.88, -20.228, 72.844, 103.77, -65.838, 35.603, -88.186, -25.626, 93.508, -10.936, -59.709, 41.8, -63.935, -44.658, -37.938, -45.626, -46.303]$^T$\\ 
  \hline
  $\mathbf{c}_1^{(1,1)}$ & [-7450.695, -1638.855, -532.965, -984.145, $-1109.648$, -1940.89, -1695.391, -620.865, -867.755, -128.604, -623.655, -323.263, $-3464.427$, -727.035, -3366.868, 85.36, 437.483, 281.729, 156.765, -73.722, 211.142, 156.795, 159.504, 161.418, 121.658, -2.902, -9.917, 127.033, 37.05, 65.094, 56.85, -67.733, -60.176, -64.592, -61.166]$^T$\\ 
  \hline
  $\mathbf{c}_1^{(1,2)}$ & [1894.096, 111.79, 96.853, 16.364, 10.536, 3201.101, -16.379, 268.71, 33.787, 3270.208, 24.0, 201.274, 713.325, -5.262, 753.718, -47.651, -25.424, 92.098, 95.832, -32.767, -24.896, -46.44, 598.465, 92.815, 575.998, 0.485, 56.367, 30.708, 405.664, 57.118, 380.519, -55.273, -40.485, -56.878, -92.164]$^T$\\ 
  \hline
  $\mathbf{c}_1^{(1,3)}$ & [3770.798, 96.099, 99.542, 6.775, 15.664, 3803.749, 10.285, 559.615, 48.421, 3595.703, 36.395, 461.837, 849.963, 1.793, 941.38, $-56.595$, -40.221, 71.064, 82.431, -40.724, -34.138, -57.088, 589.611, 85.901, 572.76, -9.081, 50.863, 16.814, 403.882, 50.887, 373.143, -51.316, -41.97, -58.904, -101.495]$^T$\\ 
  \hline
  $\mathbf{c}_1^{(1,4)}$ & [-6561.903, -3836.478, -560.448, -1159.815, -625.588, -1560.108, -1528.855, -125.107, -1044.214, -334.928, -652.849, -105.901, -2405.949, -445.003, -2267.029, 245.535, 166.496, 286.927, 141.742, -22.39, 90.256, 114.035, -9.296, 95.26, -36.871, 49.581, 38.364, 152.53, -29.427, 40.162, -7.028, -42.775, -37.071, -42.8, -39.823]$^T$\\ 
    \hline
  $\mathbf{c}_1^{(2,1)}$ & [-7410.341, -1752.622, -891.924, $-969.179$, -1107.985, -1689.173, -2103.496, -555.95, $-920.132$, -48.53, -768.867, -419.385, -3318.7, -696.401, -3233.646, 84.87, 424.845, 280.476, 155.323, -13.485, 221.798, 181.917, 161.765, 161.515, 126.487, 14.34, 13.428, 146.516, 40.508, 73.396, 60.759, -67.884, -60.113, $-64.336$, -60.885]$^T$\\ 
  \hline
  $\mathbf{c}_1^{(2,2)}$ & [2003.163, 117.463, 96.108, 20.022, 13.345, 3245.653, -6.156, 280.209, 44.47, 3335.657, 27.778, 201.803, 849.787, 8.231, 886.836, $-48.333$, -26.504, 91.799, 95.44, -33.921, $-25.521$, -46.834, 596.768, 92.245, 574.515, -0.654, 56.213, 30.329, 403.833, 56.594, 378.709, -55.291, -40.468, -56.743, -92.163]$^T$\\ 
  \hline
  $\mathbf{c}_1^{(2,3)}$ & [3864.738, 101.242, 99.863, 10.589, 18.463, 3845.681, 20.13, 566.952, 58.54, 3658.429, 39.996, 457.478, 985.06, 14.78, 1072.695, $-57.18$, -41.103, 70.934, 82.078, -41.707, -34.528, -57.235, 587.94, 85.336, 571.346, -10.026, 50.771, 16.721, 402.124, 50.346, 371.399, -51.325, -41.961, -58.766, -101.488]$^T$\\ 
  \hline
  $\mathbf{c}_1^{(2,4)}$ & [-6562.905, -3741.461, -1209.249, -1150.118, -713.878, -1199.411, -1991.563, -10.302, -1063.31, -69.487, -776.115, -114.753, $-2284.25$, -435.673, -2162.273, 235.977, 176.679, 282.717, 147.984, 26.186, 103.261, 133.183, -7.11, 99.856, -33.532, 63.676, 54.084, 169.81, -27.605, 47.83, -3.529, -42.897, -36.85, -42.532, -39.839]$^T$\\ 
  \hline
  $\mathbf{c}_1^{(3,1)}$ & [-14029.655, -139.625, -132.066, -2347.139, -2501.865, -3639.331, -945.171, -514.212, -466.832, -1881.96, -238.299, -208.875, -5538.221, -1004.449, -5445.73, 33.125, 214.271, 206.755, 116.79, -163.737, 54.012, 28.036, 225.267, 147.564, 197.931, -73.948, -28.868, 34.027, 109.245, 58.451, 110.541, -62.102, -41.611, -42.184, -57.609]$^T$\\ 
  \hline
  $\mathbf{c}_1^{(3,2)}$ & [311.46, 156.207, 140.642, 0.55, -6.43, 2710.538, -32.521, 82.339, 25.6, 2955.471, 16.371, 48.62, -385.48, -27.723, -367.971, -27.339, -6.955, 113.893, 108.145, -11.765, -21.317, -37.698, 639.375, 104.413, 611.906, 17.962, 65.857, 46.436, 445.662, 68.34, 424.305, -51.356, -36.868, -49.955, -88.708]$^T$\\ 
  \hline
  $\mathbf{c}_1^{(3,3)}$ & [2644.532, 157.259, 152.914, -5.18, -2.125, 3553.892, -6.091, 496.306, 45.258, 3381.025, 27.238, 374.956, -228.426, -19.339, -162.63, -43.994, -31.488, 82.437, 87.889, -26.37, -33.956, -51.801, 626.275, 94.728, 607.757, 1.357, 56.898, 25.435, 443.618, 60.15, 414.112, -48.051, -38.764, -52.57, -98.555]$^T$\\ 
  \hline
  $\mathbf{c}_1^{(3,4)}$ & [-24671.824, 1830.844, 2055.407, -1208.317, -1290.181, -2452.118, -1447.027, 412.954, -753.007, -2267.571, -585.973, 276.647, -3361.621, -430.587, -3189.371, 308.015, 20.593, 281.12, 98.156, 211.804, 56.064, 84.469, 33.649, 65.211, -0.126, 318.78, 135.161, 274.812, 19.193, 54.762, 39.473, -28.593, -23.701, -30.18, -28.327]$^T$\\ 
  \hline
  $\mathbf{c}_1^{(4,1)}$ & [-15935.814, -310.778, -158.314, -2995.753, -3112.366, -4059.302, -861.739, -651.799, -540.085, -1968.911, -242.135, -239.213, -6955.304, -1179.61, -6865.248, 37.725, 181.435, 207.579, 125.029, -199.342, 29.307, 6.643, 270.418, 147.156, 247.547, -144.334, -25.238, 20.419, 180.007, 73.438, 174.515, -62.154, -33.526, -31.418, -58.402]$^T$\\ 
  \hline
  $\mathbf{c}_1^{(4,2)}$ & [-366.009, 206.735, 208.47, 17.786, 4.918, 2909.079, -4.337, 75.001, 34.843, 3215.343, 52.647, 65.065, -1029.673, -0.279, -1025.148, -24.362, -5.993, 121.029, 113.002, -9.713, -30.548, -44.008, 674.108, 108.943, 645.691, 26.141, 78.557, 59.14, 495.921, 76.741, 474.026, -50.317, -35.973, -47.109, -87.59]$^T$\\ 
  \hline
  $\mathbf{c}_1^{(4,3)}$ & [1246.575, 217.713, 216.863, 25.721, 17.733, 3238.238, 11.274, 176.03, 46.546, 3327.307, 61.864, 124.77, -915.328, 6.619, -874.584, $-24.285$, -17.217, 112.124, 104.269, -12.361, -35.605, -48.637, 670.41, 106.28, 643.79, 20.061, 77.684, 53.162, 495.231, 75.412, 471.607, -45.907, -36.852, -48.313, -96.211]$^T$\\ 
  \hline
  $\mathbf{c}_1^{(4,4)}$ & [1.11, 1.288, 1.209, 6.453, -0.895, -322.136, -47.341, 1929.55, 81.366, -304.291, 65.997, 1828.138, -2914.944, -20.701, -2749.887, $-5.079$, -1.989, 29.81, -2.185, -3.866, 8.615, 18.519, -31.436, 14.212, -20.451, -4.566, 0.306, 24.487, -8.508, 5.63, -23.77, -25.062, -18.78, -21.533, -24.54]$^T$\\
  
  \hline 
  $\mathbf{c}_2^{(1,1)}$ & [-8419.001, 492.704, 823.587, 118.032, $-237.465$, 2.095, -43.577, 18.336, 11.661, -20.795, 77.852, 33.126, -45.303, 6.557, -1.614, -0.303, 0.313, -3.416, -0.388, 0.789, 2.731, 0.115, 1.718, -0.413, 0.084, -1.005, -8.073, -1.637, 1.804, -0.523, 0.029, 0.028, 0.097, 0.034, -0.016]$^T$\\ 
  \hline
  $\mathbf{c}_2^{(1,2)}$ & [5228.642, 109.824, 173.029, 50.221, 47.019, 5048.884, 72.53, 14.111, 31.614, 4421.869, 13.191, 6.01, 4455.374, -3.4, 4484.238, 9.44, 8.74, 29.55, 23.576, 6.94, -6.215, -9.259, 254.017, 33.297, 242.619, 9.437, 36.16, 32.45, 322.706, 46.217, 309.732, -37.008, -25.062, $-40.287$, -76.24]$^T$\\ 
  \hline
  $\mathbf{c}_2^{(1,3)}$ & [6989.348, 113.883, 172.083, 54.193, 58.246, 5132.046, 82.207, 35.033, 86.972, 4503.667, 59.403, 27.553, 4577.678, 2.161, 4646.831, 7.465, 7.504, 27.916, 20.56, 5.367, -7.057, -10.89, 253.067, 31.698, 240.594, 8.631, 33.897, 30.403, 320.962, 44.756, 308.47, $-32.755$, -26.104, -41.746, -85.082]$^T$\\ 
  \hline
  $\mathbf{c}_2^{(1,4)}$ & [-22302.131, 4372.126, 1767.365, 23.693, $-49.495$, -234.683, -353.198, -5.043, 6.336, 81.776, -3.51, 5.164, -5.753, 1.307, -3.679, 2.221, 15.658, 0.202, -0.154, 0.967, 0.052, -0.415, 0.18, -0.172, 0.22, -9.471, 0.585, $-0.116$, 0.452, 0.211, 0.14, 0.01, -0.046, 0.008, -0.035]$^T$\\ 
  \hline
  $\mathbf{c}_2^{(2,1)}$ & [-7150.877, -2367.05, 141.575, -1326.002, -1509.024, 412.541, -436.121, -347.226, -384.001, 135.302, -418.265, -400.174, $-1666.273$, -953.099, -1531.119, 38.508, 76.391, 90.031, 94.001, -66.735, -52.465, $-49.50$3, 14.004, 7.192, 12.092, 41.584, 123.24, 116.892, 231.118, 136.039, 216.447, -0.023, -0.001, -0.169, -0.452]$^T$\\ 
  \hline
  $\mathbf{c}_2^{(2,2)}$ & [5246.538, 117.807, 173.256, 51.769, 48.6, 5074.559, 81.403, 18.712, 37.13, 4458.744, 12.48, 6.074, 4494.675, -1.804, 4522.866, 9.106, 8.362, 28.99, 23.124, 6.669, -6.535, -9.539, 255.695, 32.721, 244.287, 9.066, 35.817, 32.192, 321.844, 45.726, 309.015, $-36.536$, -24.876, -40.187, -75.829]$^T$\\ 
  \hline
  $\mathbf{c}_2^{(2,3)}$ & [7006.45, 121.545, 172.366, 55.699, 59.776, 5157.41, 90.899, 39.549, 92.488, 4540.555, 58.647, 27.693, 4617.027, 3.82, 4685.4, 7.129, 7.131, 27.37, 20.125, 5.085, -7.362, -11.171, 254.756, 31.137, 242.269, 8.266, 33.585, 30.171, 320.113, 44.28, 307.777, -32.277, $-25.912$, -41.638, -84.669]$^T$\\ 
  \hline
  $\mathbf{c}_2^{(2,4)}$ & [-7350.338, -4863.528, -278.533, -617.352, -584.947, 403.027, -296.435, -401.974, -457.226, 166.258, -358.676, -354.013, $-462.807$, -384.38, -381.004, 99.425, 75.303, 113.771, 127.419, -78.763, -89.81, -75.296, 17.216, 12.549, 7.615, 61.094, 112.125, 101.312, 56.658, 49.062, 52.064, -0.105, 0.354, 0.305, -1.127]$^T$\\ 
  \hline
  $\mathbf{c}_2^{(3,1)}$ & [-13045.134, 284.266, 225.737, -2771.228, -2945.271, -43.614, -175.508, -95.136, $-96.347$, 8.845, -111.175, -107.817, $-3566.451$, -1110.341, -3467.507, 58.475, 8.575, 73.807, 76.306, 4.557, -23.573, -25.989, 160.108, 49.195, 154.886, 17.942, 66.915, 65.889, 172.589, 65.779, 164.742, -43.441, -19.431, -18.373, -42.248]$^T$\\ 
  \hline
  $\mathbf{c}_2^{(3,2)}$ & [2974.976, 256.59, 230.904, 6.364, 0.826, 4920.3, 39.844, 10.333, 9.42, 4228.165, 11.818, 10.586, 2286.43, -51.192, 2307.688, 14.518, 11.756, 35.542, 29.484, 11.624, -11.999, -13.999, 348.139, 49.251, 334.839, 12.825, 37.471, 33.05, 346.945, 56.911, 331.967, -42.63, -29.304, -41.656, -81.055]$^T$\\ 
  \hline
  $\mathbf{c}_2^{(3,3)}$ & [4814.16, 256.806, 231.143, 13.273, 14.828, 5006.129, 50.938, 18.117, 18.093, 4316.45, 20.497, 18.858, 2411.738, -45.045, 2473.004, 12.249, 10.044, 34.048, 27.143, 9.554, -12.237, -14.191, 347.439, 49.159, 333.789, 11.643, 36.417, 30.841, 346.239, 56.783, 330.912, $-38.481$, -30.419, -43.185, -89.956]$^T$\\ 
  \hline
  $\mathbf{c}_2^{(3,4)}$ & [-23980.403, 1361.739, 1677.576, -1650.337, -1610.656, -353.646, -277.568, -342.467, $-328.666$, -617.679, -491.073, -471.172, $-855.141$, -506.421, -801.869, 157.477, -94.885, 101.743, 97.974, 44.797, -62.051, -58.772, 61.579, 33.215, 58.286, 127.944, 137.606, 130.932, 85.54, 52.389, 80.619, -1.164, 1.233, 1.232, -0.659]$^T$\\ 
  \hline
  $\mathbf{c}_2^{(4,1)}$ & [-14908.929, 99.126, 95.786, -3419.89, $-3533.21$, 52.217, -17.979, -71.209, -67.517, 45.68, -71.002, -66.591, -5193.195, -1201.831, -5108.697, -0.068, 1.974, 32.357, 33.303, -4.271, -32.805, -33.09, 181.616, 52.396, 175.496, 0.517, 29.157, 29.615, 184.148, 59.592, 177.824, -49.465, -17.3, -16.058, $-48.095$]$^T$\\ 
  \hline
  $\mathbf{c}_2^{(4,2)}$ & [2064.193, 258.159, 232.587, 51.562, 39.436, 4979.877, 30.208, 10.619, 8.82, 4292.236, 12.598, 10.525, 1389.9, 26.852, 1393.538, 14.587, 9.996, 41.944, 35.057, 10.071, -13.995, -16.162, 385.625, 53.365, 370.999, 12.632, 42.2, 37.14, 385.685, 60.453, 369.653, -45.604, -31.793, -43.5, -83.263]$^T$\\ 
  \hline
  $\mathbf{c}_2^{(4,3)}$ & [3670.301, 258.716, 233.121, 59.014, 51.885, 5028.702, 39.383, 14.466, 12.301, 4348.359, 16.787, 13.941, 1504.763, 33.617, 1543.853, 13.404, 9.178, 41.496, 33.649, 8.774, -14.188, -16.586, 385.269, 53.413, 370.356, 12.162, 41.727, 35.77, 385.241, 60.44, 368.959, $-41.179$, -32.656, -44.704, -91.848]$^T$\\ 
  \hline
  $\mathbf{c}_2^{(4,4)}$ & [2.241, 0.6, 0.556, 5.484, 1.583, 0.135, $-0.78$, 0.406, 0.668, 0.169, 0.03, 0.196, -20.263, 4.658, -11.003, 0.047, 0.05, -0.206, 0.048, 0.007, -0.482, -0.491, 3.839, 1.204, 3.326, 0.014, -0.082, -0.065, 4.247, 0.902, 2.685, $-14.239$, -5.733, -5.799, -13.154]$^T$\\
  \hline
\end{supertabular}

\end{document}